\theoremstyle{plain}
\newtheorem{prop}{Proposition}
\newtheorem{observation}{Observation}
\theoremstyle{remark}
\newcommand*{\fancyrefthmlabelprefix}{thm}
\newcommand*{\fancyreflemlabelprefix}{lem}
\newcommand*{\fancyrefcorlabelprefix}{cor}
\newcommand*{\fancyrefdefilabelprefix}{defi}
\newcommand*{\fancyrefalglabelprefix}{alg}
\newcommand*{\frefalgname}{algorithm}
\newcommand*{\Frefalgname}{Algorithm}
\def\beq{\begin{equation}}
\def\eeq{\end{equation}}
\def\bq{\begin{quote}}
	\def\eq{\end{quote}}
\def\ben{\begin{enumerate}}
	\def\een{\end{enumerate}}
\def\bit{\begin{itemize}}
	\def\eit{\end{itemize}}
\def\r|{\right|}
\newcommand{\tr}{\operatorname{tr}}
\newcommand{\ot}{\otimes}
\newcommand{\A}{\widetilde{A}}
\newcommand\be{\begin{equation}}
\newcommand\ee{\end{equation}}
\newcommand{\<}{\langle}
\renewcommand{\>}{\rangle}
\newcommand{\vc}{\vcentcolon=}
\tikzstyle{vertex}=[circle,edge=black!40,fill=blue!35,minimum size=35pt,inner sep=1pt]
\tikzstyle{vertex3}=[rectangle,fill=none,minimum size=0pt,inner sep=1pt]
\tikzstyle{vertex2}=[circle,fill=blue!25,minimum size=0pt,inner sep=1pt]
\tikzstyle{edge} = [draw,thick,->]
\tikzstyle{edge2} = [draw,thick,dashed]
\tikzstyle{edge3} = [draw,thick,->,dashed]
\tikzstyle{edge4} = [draw,double,->, dashed]
\tikzstyle{edge5} = [draw,thick,-,decorate, decoration={snake}]
\tikzstyle{edge6} = [draw,thick,->,blue]
\tikzstyle{weight} = [font=\small]
\tikzstyle{selected edge} = [draw,line width=5pt,-,red!50]
\tikzstyle{ignored edge} = [draw,line width=5pt,-,black!20]
\tikzstyle{fillc0} = [rectangle, rounded corners, edge=black!40,minimum width=90, minimum height=40, fill=Aquamarine!20]
\tikzstyle{fillc1} = [rectangle, rounded corners, edge=black!40,minimum width=90, minimum height=40, fill=GreenYellow!40]
\definecolor{dg}{rgb}{0,.5,0}
\begin{document}
\title{Two instances of random access code in the quantum regime}

\author{Nitica Sakharwade$^{1,2,3}$, Micha{\l} Studzi\'nski$^{4,}\footnote{corresponding author, email: studzinski.m.g@gmail.com}$, Micha{\l} Eckstein$^{5}$ and Pawe{\l} Horodecki$^{1,6}$}
	\affiliation{
	$^1$ International Centre for Theory of Quantum Technologies, University of Gda{\'n}sk, Jana Ba{\.z}y{\'n}skiego 1A, 80-309 Gda{\'n}sk, Poland\\
 $^2$ Perimeter Institute for Theoretical Physics, Waterloo, Canada \\
 $^3$ Department of Physics and Astronomy, University of Waterloo, Waterloo, Canada\\
	$^4$ Institute of Theoretical Physics and Astrophysics and National Quantum Information Centre in Gda{\'n}sk,
	Faculty of Mathematics, Physics and Informatics, University of Gda{\'n}sk, Wita Stwosza 57, 80-952 Gda{\'n}sk, Poland\\
	$^5$ Institute of Theoretical Physics, Faculty of Physics, Astronomy and Applied Computer Science, Jagiellonian University,
ul. {\L}ojasiewicza 11, 30–348 Krak{\'o}w, Poland\\
	$^6$ Faculty of Applied Physics and Mathematics, National Quantum Information Centre,
 Gdansk University of Technology, Gabriela Narutowicza 11/12, 80-233 Gda{\'n}sk, Poland}

\begin{abstract}
We consider two classes of quantum generalisations of Random Access Code (RAC) and study the bounds for probabilities of success for such tasks\footnote{The published version of this work can be found in \cite{sakharwade2022two}}. It provides a useful framework for the study of certain information processing tasks with constrained resources. The first class is based on a random access code with quantum inputs and output known as the No-Signalling Quantum RAC (NS-QRAC) box [A. Grudka {\it et al.}, Phys. Rev. A {\bf 92}, 052312 (2015)], where unbounded entanglement and constrained classical communication are allowed. We show that it can be seen as quantum teleportation with constrained classical communication and provide a lower quantum bound for the success probability. We consider two modifications to the NS-QRAC scenario: The first, where unbounded entanglement and constrained quantum communication is allowed and the second, where bounded entanglement and unconstrained classical communication is allowed. We find a monogamy relation for the transmission fidelities, which --- in contrast to the usual communication schemes --- involves multiple senders and a single receiver. We provide an upper bounds for the latter and a lower one for the former. The second class is based on a RAC with a quantum channel and shared entanglement [A. Tavakoli {\it et al.}, PRX Quantum \textbf{2}, 040357 (2021)]. We study the set of tasks where two inputs made of two digits of $d$-base are encoded over a qudit and a maximally entangled state. We show that such tasks can be seen as quantum dense-coding with constrained quantum communication and explicit protocols, which give lower quantum bounds for the tasks' efficiency, in dimensions $d=2,3,4$. The employed encoding utilises Gray codes.
\end{abstract}
\maketitle

\section{Introduction}

Quantum information processing protocols offer an unprecedented advantage over classical schemes \cite{NielsenChuang} providing new resources for computation, communication or cryptography \cite{QResources}. Central to the study quantum communication is the characterisation of possible protocols based on signalling and non-signalling resources and classical and quantum resources, such as in teleportation \cite{bennett_teleporting_1993} and dense-coding \cite{bennett1992communication}, that allow for some quantum advantage. 

However, in practical scenarios, one may face some constraints such as a limited amount of available resources (quantum or classical) that restrict the possible communication tasks that can be performed perfectly \cite{korzekwa2019,Hsieh20}. This is due to the Holevo bound, which says that one cannot faithfully encode more than $m$ bits of information on $m$ qubits. Quantum Random Access Codes (QRACs) \cite{RAC83,RAC} circumvent this by allowing imperfect fidelity in return for the possibility to encode more information. 

QRACs, a generalisation of classical Random Access Codes (RACs), date back to 1983 (then named `conjugate coding') \cite{RAC83}, and have many practical applications including network coding and cryptography. In QRACs one party encodes a string of bits on a quantum system sent to a second party who wishes to decode a subset of the string. QRACs have been extensively studied \cite{nayak1999optimal, ambainis2002dense, spekkens2009preparation, tavakoli2015quantum, carmeli2020quantum}. In particular, it plays a key role in formulating the principle of information causality \cite{InformationCausality}. Another well-known generalisation of RAC, called Entanglement Assisted RAC (EA-RAC) \cite{pawlowski2010entanglement}, studies encoding information over entanglement-assisted classical channels.

\begin{figure}[h]
    \centering
\begin{tikzpicture}[scale=0.20,auto,swap]

\fill[blue!80!red!20!] (0,-3) rectangle (6,3);
\draw (0,-3) rectangle (6,3);
\draw (3,0) node {\large Alice};
\fill[blue!80!red!20!] (0+15,-3) rectangle (6+15,3);
\draw (0+15,-3) rectangle (6+15,3);
\draw (3+15,0) node {\large Bob};
\draw[->] (2+7.5,-6) -- (5,-3);
\draw[->] (4+7.5,-6) -- (1+15,-3);
\fill[blue!80!red!20!] (1+7.5,-10) rectangle (5+7.5,-6);
\draw (1+7.5,-10) rectangle (5+7.5,-6);
\draw (3+7.5,-8) node {\large NS};
\draw[->] (6,0) -- (15,0);
\draw (10.5,1) node {\large $m$};
\draw[->] (1,6) -- (1,3);
\draw[->] (3,6) -- (3,3);
\draw[->] (5,6) -- (5,3);
\draw (1,7) node {\large $\rho_1$};
\draw (3,7) node {\large $\rho_2$};
\draw (5,7) node {\large $...$};
\draw[->] (3+15,6) -- (3+15,3);
\draw[->] (3+15,-3) -- (3+15,-6);
\draw (3+15,7) node {\large $c$};
\draw (3+15,-7) node {\large $\rho_c$};

\end{tikzpicture}
    \caption{Instances of RACs in the quantum regime.}
    \label{fig:intro}
\end{figure}
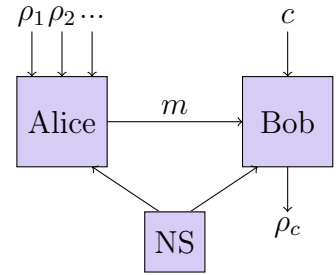

In this work we study the generalisations of RACs in the quantum regime, which go, broadly speaking, in two main directions: 1) RACs seen as constrained teleportation, where Alice wishes to teleport to Bob one of two quantum states unknown using constrained teleportation resources and 2) RACs seen as constrained dense-coding, where Alice wishes to dense code one of two classical strings unknown to Bob using constrained dense-coding resources. 

Different generalisations of RAC in the quantum regime can be understood with the help of the general setup presented in Figure \ref{fig:intro}. Alice encodes multiple states $\rho_i$ into a message $m$ with the aid of some no-signalling resources. Bob wishes to decode $\rho_c$, given his choice $c$. The following scenarios may be considered:

\begin{enumerate}[itemsep=0mm]
    \item The inputs/outputs $\rho_i,\rho_c$ are classical or quantum.
    \item The message $m$ sent is classical or quantum.
    \item The parties share no-signalling resources, such as shared randomness or entanglement.
    \item The channel has (un)constrained capacity.
    \item The no-signalling resource is (un)bounded. 
\end{enumerate}

The above options are classified in Table~\ref{Tab0}. 

\begin{table}[h]
\begin{center}
 \begin{tabular}{|c|c||c|c|c|c|}
 \hline
 & Scenario & Inputs & Channel & No-signalling & Outputs \\
 \hline
 1 & RAC & Cl & Cl & SR & Cl \\
 2 & QRAC & Cl & Q & SR & Cl \\
 3 & EA-RAC & Cl & Cl & Ent. & Cl \\
 4 & QRAC-SE & Cl & Q & Ent. & Cl \\
 5 & NS-QRAC & Q & Con. Cl & Unb NS & Q \\ 
 6 & NS-QRAC & Q & Con. Cl & Unb Ent. & Q \\ 
 7 & NS-QRAC & Q & Con. Q & Unb Ent. & Q \\  
 8 & CNS-QRAC & Q & UnC. Cl & B Ent. & Q \\  
 \hline
 \end{tabular}
 \caption{Quantum generalisations of RACs. Here Cl stands for classical, Q for quantum, Con. stands for Constrained, UnCon. stands for Unconstrained, SR stands for Shared Randomness and Ent. stands for entanglement, NS for No-Signalling resources Unb for Unbounded and B for Bounded.}
 \label{Tab0}
 \end{center} 
\end{table}
One class of quantum generalisations of RACs uses quantum resources for the transmission of classical information. Such quantum variations of RAC include three broad categories: 1) The communication channel is quantum and the parties share randomness --- Quantum Random Access Codes with Shared Randomness (QRAC-SR) \cite{ambainis2008quantum} (Row 2 of Table~(\ref{Tab0})), 2) the channel is classical and the parties share entanglement --- Entanglement Assisted Random Access Codes (EA-RAC) \cite{pawlowski2010entanglement} (Row 3 of Table~(\ref{Tab0})), and 3) the channel is quantum and the parties share entanglement \cite{tavakoli2021correlations} --- which we will refer to as Quantum Random Access Codes with Shared Entanglement (QRAC-SE) (Row 4 of Table~(\ref{Tab0})). In Section III B of \cite{tavakoli2021correlations} quantum upper bounds for the probability of success were studied for a low number of inputs. In this work we study lower quantum bounds for QRAC-SE for higher number of inputs, which we will also employ for our modification of the NS-QRAC scenario.

Another class of quantum generalisation of RAC concerns the transmission of quantum states, rather than classical bits, where the inputs itself are quantum, which has been dubbed No-Signalling Quantum Random Access Code (NS-QRAC) \cite{grudka2015nonsignaling} (Row 5 of Table~\ref{Tab0}). The authors of \cite{grudka2015nonsignaling} considered a restricted classical channel and unbounded no-signalling resources and showed that it can be realised perfectly using PR boxes \cite{PR_box}. In this work, we establish a lower bound for the probability of success of NS-QRAC with unbounded entanglement resources (Row 6 of Table~(\ref{Tab0})). Then, we analyse two modifications to the NS-QRAC scenario. First, we consider a quantum channel instead of the classical one shared between the two parties (Row 7 of Table~(\ref{Tab0})) and, second, we consider a constrained entanglement scenario with unbounded classical communication, which we call Constrained-No-Signalling Quantum Random Access Code (CNS-QRAC). The latter (Row 8 of Table~(\ref{Tab0})) has not been considered in the literature before.

Therefore, in this work we provide lower quantum bounds for the probability of success of two different classes of QRACs. We show that the considered tasks are operationally equivalent to scenarios with constrained resources. 

\subsection{Summary of the results}

In Sec. \ref{sec:tele} we consider, as a warm-up, the task of quantum teleportation with constrained classical resources.
We show, using the notion of generalised Bell states, that the maximal fidelity of a teleported state equals $k/d^2$, where $d$ is the Hilbert space dimension and $k\leq d^2$ is the number of bits of classical communication transmitted by Alice to Bob. 

Sec. \ref{sec:NSQRAC} concerns the NS-QRAC scenario as presented in \cite{grudka2015nonsignaling}, in which Bob aims at reproducing at his output one of the two qubits possessed by Alice. In doing so, Bob is equipped with two bits of classical communication received from Alice, as well as two maximally entangled pairs. We show that this problem can be seen as a constrained teleportation task. We provide a quantum lower bound, $P_{\text{succ}}^{\text{QM}} \leq \frac{5}{8}$, for the success of such a task for the qubit case.

In Sec. \ref{sec:QRAC-SE}, we introduce and study 
the QRAC-SE setup introduced in \cite{tavakoli2021correlations}. It concerns the classical information to be encoded and decoded (like in the classical RAC) while using \emph{both} a quantum channel as well as a shared entanglement resource. The QRAC-SE brings together QRAC, which uses a quantum channel but no entanglement \cite{ambainis2008quantum}, and EA-RAC which involves entanglement but employs a classical channel \cite{pawlowski2010entanglement}. We show that this problem can be seen as a constrained dense-coding protocol, which is dual to the constrained quantum teleportation considered in the previous sections. Namely, here the parties have more classical input than they can send perfectly using qudit dense-coding \cite{liu2002general}. We provide and analyse the efficiency of such protocols which can be quantified in two ways by calculating 1) the  minimum probability of success of decoding either of two strings, each of which consists of two digits of base $d$ or, 2) the average probability of success of the protocol (over all possible strings). We show that in the qubit case both these measures coincide. The encoding by Alice utilises the roots of the generalised Pauli matrices, as well as Gray codes \cite{gray1953pulse} and its non-Boolean generalisations \cite{er1984generating,sharma1978m}, which is an example of a single distance code. We also present an analysis for higher dimensions, $d=3,4$ and show that the two measures of efficiency differ (in contrast to $d=2$) --- the interpretation of this fact is also discussed. Doriguello et al. \cite{doriguello2021quantum} have studied RAC variations extended to Boolean functions denoted by the prefix $f-$, including $f-$QRAC and $f-$EA-RAC. We show a proof of concept of extending QRAC-SE to encode Boolean functions of initial classical information called $f-$QRAC-SE.

In Sec. \ref{sec:NSQRACwithQRACSE} we revisit and provide a modification for the NS-QRAC scenario as presented in \cite{grudka2015nonsignaling}, in which Bob aims at reproducing at his output one of the two qubits possessed by Alice. In doing so, Bob is equipped with one qubit received from Alice, as well as three maximally entangled pairs. This modification is in some sense a truly quantum RAC problem since the information to be encoded as well as the channel shared by the parties is quantum. We show that the quantum lower bound for the success of such a task coincides with that of QRAC-SE studies in Sec. \ref{sec:QRAC-SE} ($P_{\text{succ}}^{\text{QM}} \approx 0.728$), which is a better bound than the one obtained in scenario in Sec. \ref{sec:NSQRAC}.

Finally, in Sec \ref{sec:constrainedententanglement}, we consider a second modification of the NS-QRAC scenario from \cite{grudka2015nonsignaling}.  
We study constrained no-signalling resources while allowing unbounded classical information to be sent from Alice to Bob --- we call these Constrained-No-Signalling Quantum Random Access Codes (CNS-QRAC). We provide an upper bound, $P_{\text{succ}}^{\text{QM}} \leq \frac{3}{4}$, for the success of such a task for the qubit case. We further generalise the protocol to the case of $N \geq 2$ inputs of $d$-level quantum systems and show that $P_{\text{succ}}^{\text{QM}} (d,N) \leq (N+d-1)/(dN)$. Furthermore, we discuss an `asymmetric' scenario in which the input quantum systems are chosen randomly with probabilities $\{p_i\}_{i=1}^N$. We present an algorithmic solution for this case using constraints coming from entanglement monogamy by exploiting the framework of universal asymmetric quantum cloning machines~\cite{kay2009optimal,WernerCloning}. However, as we describe later, these monogamy relations are understood in a `reverse order' --- with many senders and only one receiver.

We conclude, in Sec \ref{sec:outlook}, with a discussion and open questions related to the fidelity bounds for NS-QRACs which are simulable {\it via} quantum interactions, and possible future studies of the QRAC-SE.

\section{\label{sec:tele}A constrained teleportation scenario}
In this section, we introduce the necessary notation and, as a warm-up for our further considerations, 
we consider a constrained quantum teleportation task, where the parties share a classical channel with fewer inputs than required for perfect teleportation \cite{bennett_teleporting_1993}.

Following~\cite{HorodeckiMPRFidelity} we start by introducing two parameters describing channels and states --- the singlet fraction (or entanglement fidelity) $F$ of a state and the fidelity of a quantum channel $\Lambda$, called also the transmission fidelity $f$. We shall use these parameters for describing the efficiency of the constrained teleportation and later in the NS-QRAC game (Sec. \ref{sec:constrainedententanglement}). For an arbitrary bipartite quantum state $\rho$ the singlet fraction (entanglement fidelity ) \cite{schumacher1996sending} is its overlap with the maximally entangled state $|\psi^+\>=\frac{1}{\sqrt{d}}\sum_i |ii\>$:
\begin{equation}
\label{singlet_fraction}
F(\rho)=\<\psi^+|\rho|\psi^+\>.
\end{equation}
Considering the local action of a channel $\Lambda$ on half of the maximally entangled state the parties create a state $\rho_{\Lambda}$. The entanglement fidelity $F(\rho_{\Lambda})$ tells us how close the input and the output states are. When one fixes the channel $\Lambda$, the quantity $F(\rho_{\Lambda})$  is called the entanglement fidelity of the channel $\Lambda$ denoted as $F(\Lambda)$. 
On the other hand, we have also another quantity describing the quality of a channel transmission. For an arbitrary channel $\Lambda$ one can define the transmission fidelity $f$ given as
\begin{equation}
\label{trans_fidelity}
f(\Lambda)=\int \operatorname{d}\!\phi \, \<\phi|\Lambda \big(|\phi\>\<\phi| \big)|\phi\>,
\end{equation}
where the integral is taken over all pure states distributed uniformly with the Haar measure. Exploiting the property of invariance of entanglement and transmission fidelity with respect to averaging over $U\otimes U^*$, where $U$ is a unitary transformation and the star denotes complex conjugation, one obtains the following dependence for any channel $\Lambda$~\cite{HorodeckiMPRFidelity}:
\begin{equation}
f(\Lambda)=\frac{F(\Lambda)d+1}{d+1}\,,
\end{equation}
where $d$ is the size of the Hilbert space that $\Lambda$ acts on.

Now we are in a position to introduce the constrained teleportation procedure. In the usual scenario for quantum teleportation, as established in~\cite{bennett_teleporting_1993}, see Figure~\ref{fig:tele1}, Alice has access to all $d^2$ measurements, and the teleportation process is perfect --- the entanglement fidelity $F$ equals to 1.
\begin{figure}[h!]
 \centering
 \includegraphics[width=.8\linewidth]{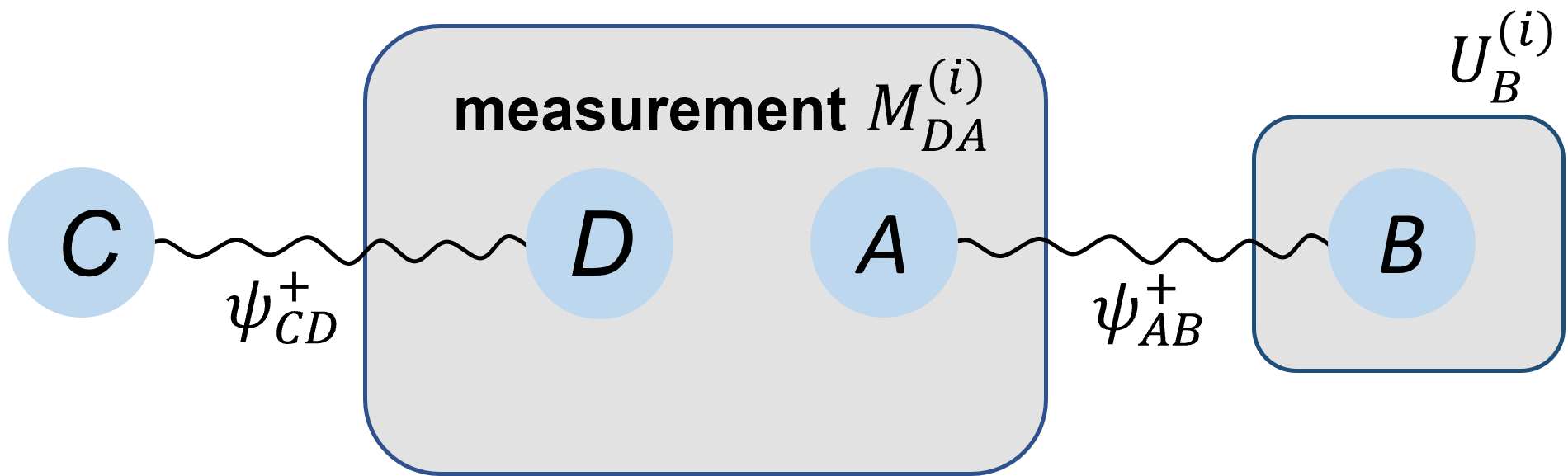}
 \caption{\label{fig:tele1}
 The schematic configuration for quantum teleportation task presented in the form of swapping of quantum correlations (this picture will be useful for further considerations). Two parties, Alice and Bob, share a $d-$dimensional maximally entangled state $\psi^{+}_{AB}$. 
 In addition, Alice has two particles in the maximally entangled state $\psi^{+}_{CD}$. She wishes to swap the correlations between 
 BC and DC in the sense that at the end the particles C and B should be maximally entangled like C and D were before the protocol.
 The measure of her success is the parameter $F$, called entanglement fidelity, calculated for the joint state of the particles C and B after the protocol (see the main text).
 To accomplish the task Alice applies a joint measurement (consisting of operators $M^{(i)}_{DA}$ corresponding to projections onto maximally entangled states) getting an outcome $a_i$, for $1\leq i\leq d^2$, transmitted later by a classical channel to Bob. To recover the state, Bob has to apply to system $B$ a unitary correction $U_B^{(i)}$, depending on the classical outcome $a_i$. The entanglement fidelity $F$ of the final state of the particles C and B, in this case, is maximal and equals 1  which means that after the protocol  Alice and Bob particles are indeed maximally entangled as intended.
 }
\end{figure}
In the constrained scenario, we assume that Alice's POVM (positive-operator valued measure) measurements $M_{DA}^{(i)}$ on systems $DA$ that have $k\leq d^2$ measurement operators, satisfying the standard relations:
\be
\label{cons}
\forall 1\leq i \leq k \qquad M_{DA}^{(i)}\geq 0,\qquad \sum_{i=1}^{k}M_{DA}^{(i)}=\mathbf{1}_{DA},
\ee
and we ask how well the parties can perform. 

After Alice's measurement $M_{DA}^{(i)}$ its outcome  is communicated to Bob by a classical channel.
Next, he chooses unitary operation $U_B^{(i)}$ depending on the transmitted outcome obtaining the following  unnormalized shared state:
\be
\label{state}
\varrho_{BC}^{(i)}= \tr_{AD} \left[U_B^{(i)}M_{DA}^{(i)}(\psi_{AB}^+\otimes \psi_{CD}^+)U_B^{(i) \dagger} \right], 
\ee
where $\psi_{AB}$ and $\psi_{CD}$ are maximally entangled states, and the final formula for the entanglement fidelity reads:
\begin{equation}
F( \{ M_{DA}^{(i)}, U_B^{(i)} \}) =\sum_{i=1}^{k} \tr(\varrho_{BC}^{(i)}\psi_{BC}^{+}).
\end{equation}
We would like to maximise each of the terms in the above sum, each of which is smaller or equal than one, 
i.e. our goal is to learn the following quantity:
\be
F_{\max} = \max_{ \{ M_{DA}^{(i)}, U_B^{(i)} \} } F ( \{ M_{DA}^{(i)}, U_B^{(i)} \}).
\ee
\begin{prop}
\label{coinstrainedF}
In the constrained teleportation protocol, when the sender has $k\leq d^2$ measurement outcomes, the maximal entanglement fidelity $F_{\max}$ of the protocol equals to
\begin{equation}
F_{\max} = \max_{ \{ M_{DA}^{(i)}, U_B^{(i)} \} } F ( \{ M_{DA}^{(i)}, U_B^{(i)} \})=\frac{k}{d^2}.
\end{equation}
\end{prop}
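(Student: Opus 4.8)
The plan is to collapse the sum $F=\sum_{i=1}^{k}F_i$ into a form in which the optimisation over measurements and corrections becomes transparent, and then bound it term by term. The bulk of the work is the evaluation of a single summand. Writing $|\Psi\rangle=|\psi^+\rangle_{AB}\otimes|\psi^+\rangle_{CD}$ and moving $U_B^{(i)}$ onto the target projector by cyclicity of the trace, one gets
\[
F_i=\tr\big(\varrho_{BC}^{(i)}\psi_{BC}^{+}\big)=\langle\Psi|\big(P^{(i)}_{BC}\otimes M^{(i)}_{DA}\big)|\Psi\rangle,\qquad P^{(i)}_{BC}=U_B^{(i)\dagger}\psi_{BC}^{+}U_B^{(i)}.
\]
Since a one-sided unitary sends a maximally entangled state to another maximally entangled state, $P^{(i)}_{BC}=|\chi^{(i)}\rangle\langle\chi^{(i)}|$ is a rank-one projector onto the maximally entangled vector $|\chi^{(i)}\rangle=\big(U_B^{(i)\dagger}\otimes\one\big)|\psi^+\rangle_{BC}$. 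I would then expand both singlets in the computational basis, perform the partial trace over $AD$ against $M^{(i)}_{DA}$, and contract the surviving $B,C$ legs with $|\chi^{(i)}\rangle$.

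The key manipulation is the ricochet identity $(O\otimes\one)|\psi^+\rangle=(\one\otimes O^{T})|\psi^+\rangle$, applied once per singlet, which routes the free index of $P^{(i)}$ through the two maximally entangled states and collapses the contraction to
\[
F_i=\frac{1}{d^{2}}\,\langle\omega^{(i)}|\,M^{(i)}_{DA}\,|\omega^{(i)}\rangle,
\]
where $|\omega^{(i)}\rangle_{DA}$ is again a maximally entangled state, fixed by Bob's correction $U_B^{(i)}$. As $U_B^{(i)}$ ranges over all unitaries, $|\omega^{(i)}\rangle$ ranges over all maximally entangled states of $DA$, while $\{M^{(i)}_{DA}\}$ ranges over all $k$-outcome POVMs obeying \eqref{cons}; so the two pieces of freedom decouple and
\[
F_{\max}=\frac{1}{d^{2}}\,\max\sum_{i=1}^{k}\langle\omega^{(i)}|M^{(i)}|\omega^{(i)}\rangle,
\]
the maximum taken over all such POVMs and all maximally entangled $|\omega^{(i)}\rangle$.

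For the upper bound, the normalisation $\sum_i M^{(i)}=\one$ with $M^{(i)}\geq0$ forces $0\leq M^{(i)}\leq\one$, whence $\langle\omega^{(i)}|M^{(i)}|\omega^{(i)}\rangle\leq1$ for each unit vector, and summing $k$ terms yields $F_{\max}\leq k/d^{2}$. For achievability I would take the generalised Bell basis $\{|\omega^{(j)}\rangle\}_{j=1}^{d^{2}}$ of orthonormal maximally entangled states, set $M^{(i)}=|\omega^{(i)}\rangle\langle\omega^{(i)}|$ for $i<k$, and coarse-grain the remaining $d^{2}-k+1$ projectors into $M^{(k)}=\sum_{j\geq k}|\omega^{(j)}\rangle\langle\omega^{(j)}|$, so that $\sum_i M^{(i)}=\one$; choosing $|\omega^{(k)}\rangle$ as the target for outcome $k$, every term equals $1$ and $F=k/d^{2}$. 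This is precisely the standard teleportation protocol with $d^{2}-k+1$ of the Bell outcomes merged, confirming consistency with $F=1$ at $k=d^{2}$.

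The main obstacle is the index bookkeeping in the reduction: tracking which factor of each $\psi^{+}$ belongs to the measured pair $DA$ and which to the output pair $BC$, and applying the ricochet identity on the correct tensor legs so that Bob's correction is converted cleanly into the freedom to choose the maximally entangled test vector $|\omega^{(i)}\rangle$. Once $F_i$ is in the form $\tfrac{1}{d^{2}}\langle\omega^{(i)}|M^{(i)}|\omega^{(i)}\rangle$, both the bound and its saturation follow immediately from $0\leq M^{(i)}\leq\one$.
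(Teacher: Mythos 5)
Your proposal is correct and follows essentially the same route as the paper's proof: the same ricochet (``ping-pong'') identity to reduce each summand to $\tfrac{1}{d^2}$ times the overlap of a POVM element with a maximally entangled rank-one projector, the same upper bound from $0\leq M^{(i)}\leq\mathbf{1}$, and the same saturation via a generalised Bell basis with the residual outcomes coarse-grained into the $k$-th element. The only (immaterial) difference is that you keep the test vector on the $DA$ side rather than transposing the measurement onto $CB$ as the paper does.
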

\begin{proof}
The proof is based on a straightforward calculation of the entanglement fidelity between the input and the output state. The sender has access to $k$ measurement operators in the POVM, acting on systems $DA$, satisfying equation~\eqref{cons}.
Then, denoting the teleportation channel from Alice to Bob by $\mathcal{N}$, the entanglement fidelity reads
\begin{equation}
\label{eq2}
\begin{split}
F&=\tr [\psi_{CB}^+(\mathbf{1}_C\otimes \mathcal{N}_D)\psi_{CD}^+]\\
&=\sum_{i=1}^{k}\tr [\psi^+_{CB}U_B^{(i)}M_{DA}^{(i)}(\psi^+_{CD}\ot \psi^+_{AB})(U_B^{(i)})^{\dagger}].
\end{split}
\end{equation}
Now, applying the so called `ping-pong' trick, which reads
\begin{equation}
\label{eq3}
(\mathbf{1}_A\ot X_B)\psi^+_{AB}=(X_A^t\ot \mathbf{1}_B)\psi^+_{AB},
\end{equation}
for an arbitrary operator $X$, with $t$ denoting a transposition, we rewrite the second line of~\eqref{eq2} as
\begin{equation}
\label{eq4}
\begin{split}
F&=\sum_{i=1}^{k}\tr [\psi^+_{CB}U_B^{(i)}(M_{CB}^{(i)})^{t}(\psi^+_{CD}\ot \psi^+_{AB})(U_B^{(i)})^{\dagger}]\\
&=\sum_{i=1}^{k}\frac{1}{d^2}\tr [\psi^+_{CB}U_B^{(i)}(M_{CB}^{(i)})^{t}(U_B^{(i)})^{\dagger}]\\
&=\sum_{i=1}^{k}\frac{1}{d^2}\tr [(U_B^{(i)})^{\dagger}\psi^+_{CB}U_B^{(i)}(M_{CB}^{(i)})^{t}].
\end{split}
\end{equation}
We are interested in the maximal value of the entanglement fidelity $F$ from~\eqref{eq4}, where the maximisation runs over all possible sets of measurements $\{M_{CB}^{(i)}\}_{i=1}^{k}$. To do so it is enough to choose for every $1\leq i\leq k-1$ unitary $U_B^{(i)}$ to be one of the generating unitary for the generalised Bell states. Indeed, to maximise~\eqref{eq4} we need to ensure that each term under the sum is equal to $1/d^2$. Notice that each $M_{CB}^{(i)}$ is bounded by the identity operator and $\psi^+_{CB}$ are rank-1 projectors. Using this observation we conclude that their overlap is bounded by $1/d^2$ already, and we must saturate the overlaps.   This can be done by defining the following mapping $(U_B^{(i)})^{\dagger}\psi^+_{CB}U_B^{(i)}\mapsto \psi^+_{CB}(i)$, where $\psi^+_{CB}(i)$ is the generalised Bell state. Now to maximise the entanglement fidelity, we choose the first $k-1$ measurements $(M_{CB}^{(i)})^{t}$ to exactly equal $\psi^+_{CB}(i)$, while the last one we take to be $(M_{CB}^{(k)})^{t}=\mathbf{1}-\sum_{j=1}^{k-1}\psi^+_{CB}(j)$. Plugging the above to~\eqref{eq4} we get
\begin{equation}
\label{eq5}
\begin{split}
F_{\max}&=\frac{1}{d^2}\sum_{i=1}^{k-1}\tr[\psi^+_{CB}(i)]+\\
&\qquad+\frac{1}{d^2}\tr \bigg[\psi^+_{CB}(k)\bigg(\mathbf{1}-\sum_{j=1}^{k-1}\psi^+_{CB}(j)\bigg)\bigg]\\
&=\frac{k-1}{d^2}+\frac{1}{d^2}=\frac{k}{d^2},
\end{split}
\end{equation}
since for all $1\leq j\leq k-1$ we have $\tr(\psi^+_{CB}(k)\psi^+_{CB}(j))=0$.
\end{proof}

\section{\label{sec:NSQRAC}Random access code with quantum inputs and output}

\subsection{No-Signalling Quantum Random Code Boxes (NS-QRAC)}

The NS-QRAC was introduced in \cite{grudka2015nonsignaling} as a quantum version of the random access code, with qubits instead of bits being encoded and randomly accessed. Consider two space-like separated parties, Alice and Bob, who share \textit{any} No-Signalling resource, which may be quantum or post-quantum, represented by the NS-QRAC Box. Alice has two qubits $\psi_1, \psi_2$ at her disposal and communicates to Bob two bits $a=(a_1,a_2)$ of classical information. Bob, using the received data aims at reproducing the qubit of his choice $\rho_{x} \vc \rho_{x,b=a}$, with $x \in \{1,2\}$ parametrising his decision\footnote{In \cite{grudka2015nonsignaling} Bob's decision was encoded in a quantum input $\omega_x$. However, we are interested in studying the efficiency of quantum information transmission from Alice to Bob, for which it is sufficient to consider a classical 1-bit input $x$.} --- see Fig. \ref{fig:RAC} 

\begin{figure}[h]
 \centering
 \includegraphics[width=1.0\linewidth]{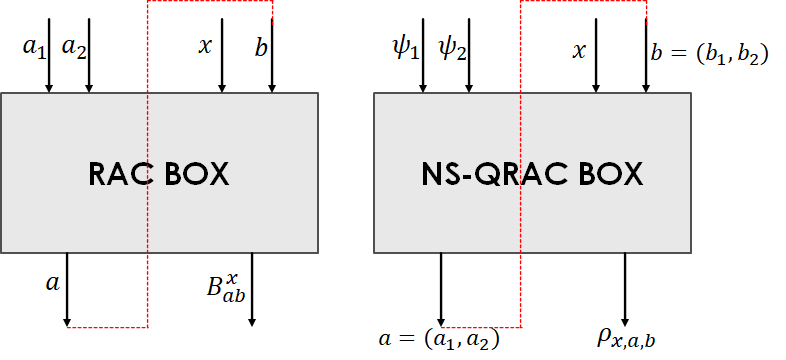}
 \caption{\label{fig:RAC}
 (Left panel): In the classical RAC Bob aims at reproducing at his output $B$ one of Alice's bits, $a_1$ or $a_2$, given his choice $x \in \{1,2\}$ and a single bit $a$ received from Alice. (Right panel) In the quantum analogue Bob seeks to reproduce at his output $\rho$ one of Alice's qubit, $\psi_1$ or $\psi_2$, of his choice $x \in \{1,2\}$ given two bits of information $a=(a_1,a_2)$ from Alice and a NS-QRAC box\cite{grudka2015nonsignaling}. 
 }
 \label{fig:NSQRAC}
\end{figure}

In \cite{grudka2015nonsignaling} it was shown that such a task can be perfectly realised when the NS-QRAC box consists of two maximally entangled states as well as two no-signalling post-quantum devices --- the Popescu--Rohrlich (PR) boxes \cite{PopescuRohrlich94}. This result relies on the fact that one can achieve perfect RAC using PR--boxes \cite{InformationCausality,popescu2014nonlocality}. On the other hand, it cannot be perfectly realised if Alice and Bob share only quantum no-signalling resources, though the quantum bound for the probability of success has not been quantified so far \cite{grudka2015nonsignaling}.

We ask what is the probability of success in the scenario where the NS-QRAC box consists only quantum no-signalling resources (that is, entanglement) and Bob is guessing one of two qubits (or, more generally, qudits), $\psi_1, \psi_2$. For fixed $\psi_1, \psi_2$ the probability of success is defined as
\begin{align}
\label{N2}
 P_{\text{succ}}^{\text{QM}} = \tfrac{1}{2} \big( F(\psi_1,\rho_1) + F(\psi_2,\rho_2) \big),
\end{align}
where $F$ is the quantum fidelity \cite{jozsa1994fidelity}.

\subsection{A quantum lower bound NS-QRAC}
 Consider the generalised scenario (as compared to \cite{grudka2015nonsignaling}), in which Alice has two qudits and Bob has a choice of which qudit to be teleported to him. Alice can send at most $2\log(d)$ bits to Bob. Alice and Bob share a NS-QRAC box consisting of two d-dimensional maximally entangled states $\psi^+_{\widetilde{A}_1 B_1}=\psi^+_{\widetilde{A}_2 B_2}=\frac{1}{\sqrt{d}}\sum_i^d |ii\rangle$. 

Note that if Alice could communicate $4\log(d)$ bits, rather than $2\log(d)$, then she could, using two shared maximally entangled states, teleport to Bob both qudits and hence Bob could perfectly recover the qudit of his choice.

Therefore, this scenario can be seen as a teleportation protocol with constrained communication channel, similar to the warm-up problem considered in Sec. \ref{sec:tele}. 
\begin{figure}[h!] \includegraphics[width=0.9\linewidth]{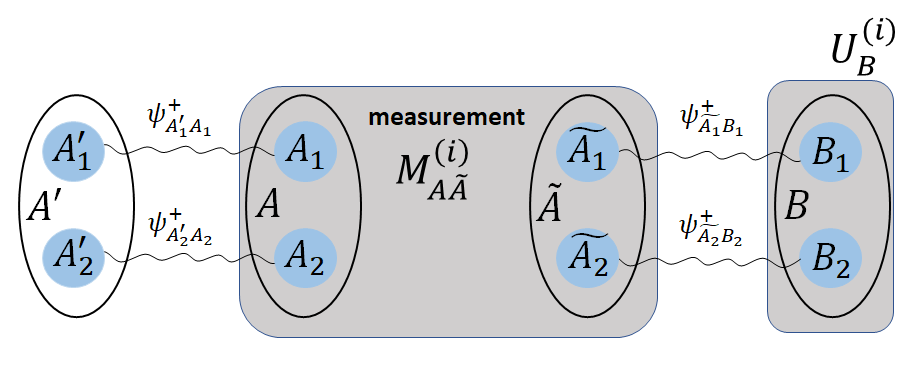}
 \caption{\label{twostatetele}The schematic description of the NS-QRAC scenario with two parties with quantum no-signalling resources, i.e. two $d$-dimensional maximally entangled states. In this situation Alice wishes to teleport either $\psi_1$ or $\psi_2$. Then Bob chooses which state of these two he wants to recover by applying respective unitaries $U_B^{(i,1)}, U_B^{(i,2)}$ depending on classical messages, indexed by $i$, sent by Alice.}
\end{figure}
We are interested in the fidelity and we use an analogous calculation to that in the Sec. \ref{sec:tele}

If Bob has to output \textit{both} qudits then one could straightforwardly use the result from Proposition~\ref{coinstrainedF} when considering the joint purified state $\psi^+_{A_1'A_1}\otimes \psi^+_{A_2'A_2} $, joint POVM measurement operators $M^{A_1\tilde{A_1}A_2\tilde{A_2}}_i$ for $i\in\{1,...,d^2\}$, joint shared entanglement $\psi^+_{\tilde{A_1}B_1}\otimes\psi^+_{\tilde{A_2}B_2}$ and joint corrective unitary $U_i^{B_1B_2}$. The joint entanglement fidelity of Bob having $\psi_1$ and $\psi_2$ from Proposition~\ref{coinstrainedF} would then be :
\begin{align}
\label{jointfidelity}
 F=\frac{d^2}{d^4}=\frac{1}{d^2}.
\end{align}
Note that the above result can be also achieved by considering product unitary correction $U_B$ of the form
\begin{align}
\label{unitaries}
 U_i^B=U^{B_1}_i \otimes U^{B_2}_i,\quad i=1,...,d^2,
\end{align}
and classically correlated measurements on Alice's side of the form
\begin{equation}\label{Mfactorise}
M_{A_1'A_2'B_1B_2}^{(i)}=M_{A_1'B_1}^{(i)}\otimes M_{A_2'B_2}^{(i)}.
\end{equation}

But Bob only needs to output one qudit of his choice for the NS-QRAC problem, and we are interested in evaluating equation~(\ref{N2}). Given that considering factorised measurements of the form in equation~(\ref{Mfactorise}) and factorised unitary correction of the form in equation~(\ref{unitaries}) was sufficient to find the joint fidelity equation~(\ref{jointfidelity}), we will consider these for our scenario and one can show that this is sufficient. Notice that this is equivalent to the constrained teleportation protocol from section~(\ref{sec:tele}) performed twice with a total of $d^2$($<d^4$) inputs since Alice can communicate $2\log(d)$ bits (or $d^2$ inputs labelled by $i$). If we assign $k'\leq d^2$ inputs to teleportation of $\psi_1$ and if we assign $(d^2-k')\leq d^2$ inputs to teleportation of $\psi_2$ upon application of equation~(\ref{eq5}) we have:

\begin{align}
&P_{\text{succ}}^{\text{QM}}\nonumber \\&\geq \frac{1}{2} \bigg( 
\sum_{i=1}^{k'}\operatorname{tr} \left[\psi^+_{A_1'B_1}U_{B_1}^{(i)}M_{A_1\widetilde{A}_1}^{(i)}\left(\psi^+_{A_1'A_1}\otimes \psi^+_{\widetilde{A}_1B_1}\right)U_{B_1}^{(i),\dagger}\right]\nonumber\\
&+
\sum_{i=k'+1}^{d^2}\operatorname{tr} \left[\psi^+_{A_2'B_2}U_{B_2}^{(i)}M_{A_2\widetilde{A_2}}^{(i)}\left(\psi^+_{A_2'A_2}\otimes \psi^+_{\widetilde{A}_2B_2}\right)U_{B_2}^{(i),\dagger}\right]
\bigg)\nonumber\\
& = \frac{1}{2} \left( \frac{k'}{d^2} + \frac{(d^2-k')}{d^2} \right) = \frac{1}{2}. \label{constrainedclassical}
\end{align}
Note that the value of $k'$ assigned to $\psi_1$ does not matter for the average success. 

One can do slightly better by employing another strategy with the same resources. In this case, Alice does the standard teleportation measurements on her end to receive $d^4$ inputs but sends the $d^2$ inputs associated with the first qudit. Then Bob's guessing probability for the first qudit would be $1$ while guessing the second qudit would be completely random with chance $\frac{1}{d^2}$. The average success probability would then be:
\begin{align}\label{eqn20}
\frac{1}{2}\left(1+\frac{1}{d^2}\right) \leq P_{\text{succ}}^{\text{QM}}.
\end{align}

These results are not surprising considering here the NS-QRAC box consists of two maximally entangled states between Alice and Bob while in \cite{grudka2015nonsignaling} it is shown that the NS-QRAC box which consists of two PR-boxes and two maximally entangled states can achieve $P_{\text{succ}}=1$. The presented protocol yields a lower bound for $P_{\text{succ}}^{\text{QM}}$ given a fixed amount of shared entanglement (two maximally entangled states). It is not clear whether this protocol is optimal and whether the bound \eqref{eqn20} can be improved with the increase of the amount of shared entanglement (although intuitively the latter seems unlikely given the difficulty of encoding $d^4$ inputs into $d^2$ inputs).

 We will see in Sec.~(\ref{sec:NSQRACwithQRACSE}) that if one modifies the NS-QRAC scenario so that Alice can send a quantum message (a qudit) instead of a classical message ($2\log(d)$ bits) one can have an improvement of the bound from equation~(\ref{eqn20}). To this end we will employ the QRAC-SEs, defined and studied in the following section
 
\section{\label{sec:QRAC-SE}Quantum Random Access Codes with Shared Entanglement}
In previous sections, we studied the NS-QRAC, which --- as we have shown --- can be seen as a constrained quantum teleportation task. In this section, we consider a complementary bipartite task -- a constrained quantum dense-coding task --
and we frame this task as a QRAC with shared entanglement (QRAC-SE). 

The (Q)RAC problem in the literature is denoted by $n \overset{p}{\mapsto} m$, where $n$ bits are encoded over $m$ (qu)bits, where any bit may be decoded with the probability of success \emph{at least} $p$, which should be more than the trivial guessing probability with no communication. The principle of Information Causality \cite{pawlowski2009information} states that the number of bits that can be perfectly decoded in an instance of QRAC is at most $m$. 
Since we also allow for shared entanglement one can send more bits perfectly (reminiscent to dense-coding). The QRAC problem of the form $n \overset{p}{\mapsto} 1$ has been studied in detail \cite{ambainis2008quantum} where $p$ is required to be greater than merely guessing, that is $p>0.5$. 

Given that we are interested in including shared entanglement into QRAC, the principle of Information Causality does not restrict us since we can utilise dense-coding, and thus we may decode multiple bits simultaneously. Therefore, we require a new general notation for QRAC-SE when $n$ digits of base $d$ are encoded over $m$ qudits with $d'$ dimension and $l$ shared-entanglement resources of $d'^2$ dimension, and where $k$ digits of base $d$ are decoded with probability at least $p$, presented as: 
\begin{align}
n_d \xmapsto{p, k_d} (m_{d'} , l_{d'}), 
\end{align} 
Now that the notation has been established, we present the problems we consider which are special cases of the above general form. Here digits of base $d$ are a generalisation to bits, that suit the task of dense-coding with qudits. 

\subsection{Problem Statement}

We consider the class of problems where we have two classical strings $\{a^{(0)},a^{(1)}\}$ each made of two digits of base $d$ which Alice sends over a qudit to Bob and both share the maximally entangled state $\ket{\psi_+}=\frac{1}{\sqrt{d}}\sum^{d-1}_{i=0}\ket{ii}$. Given the choice bit $c\in\{0,1\}$ unknown to Alice, Bob decodes $a^{(c)}$. We use the fact that two digits of base $d$ correspond to $d^2$ inputs in the notation.
This class of problems is denoted by:
\begin{align}\label{eqn:QRACSEd}
 2_{d^2} \xmapsto{p,1_{d^2}} (1_{d} , 1_{d}).
\end{align}
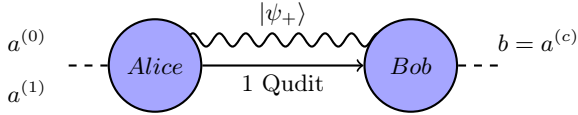
\begin{figure}[h!]
 \centering
 \begin{tikzpicture}[scale=1.7,auto,swap]
 \foreach \pos/\name in { {(0,-0.8)/Alice'}, {(2,-0.8)/Bob'},{(-0.7,-1.0)/a_i}}
 \node[vertex3] (\name) at \pos {};
 \foreach \source/ \dest /\weight in { Alice'/Bob'/}
 \path[edge5] (\source) -- node[weight] {$\weight$} (\dest);
 
 \foreach \pos/\name in { {(0,-1)/Alice}, {(2,-1)/Bob}}
 \node[vertex] (\name) at \pos {$\name$};

 \foreach \pos/\name in { {(1,-0.6)/psi }}
 \node[vertex3] (\name) at \pos {$|\psi_+\rangle$};
 
 \foreach \pos/\name in { {(-1.0,-0.8)/a0}}
 \node[vertex3] (\name) at \pos {$a^{(0)}$};

 \foreach \pos/\name in { {(-1.0,-1.2)/a1}}
 \node[vertex3] (\name) at \pos {$a^{(1)}$};
 
 \foreach \pos/\name in { {(3.0,-0.8)/b=a^{(c)}}}
 \node[vertex3] (\name) at \pos {$\name$};
 
 
 \foreach \pos/\name in { {(2.75,-1.0)/b}}
 \node[vertex3] (\name) at \pos {}; 
 
 \foreach \source/ \dest /\weight in { Alice/Bob/\text{1 Qudit}}
 \path[edge] (\source) -- node[weight] {$\weight$} (\dest);

 \foreach \source/ \dest /\weight in {a_i/Alice/, Bob/b/}
 \path[edge2] (\source) -- node[weight] {$\weight$} (\dest);
\end{tikzpicture}
 \caption{Schematic diagram for QRAC-SE problem of the class $2_{d^2} \xmapsto{p,1_{d^2}} (1_{d} , 1_{d})$, a bipartite task within which Alice encodes two strings $\{a^{(0)},a^{(1)}\}$ (each consisting of two digits of base $d$) over a qudit, which is part of the maximally entangled state $|\psi_+\rangle$ shared by Alice and Bob. Alice then sends the qudit and Bob performs a measurement on the entangled state to decode a string $a^{(c)}$ of his choice c.}
\label{fig:QRACSEd}
\end{figure}

For the purpose of this paper, the states encoded by Alice through applying local gates on the entangled state are pure $|\psi_e\rangle$ and the measurements by Bob correspond to projectors associated with the pure states $|\psi_d\rangle$. Each string value, as well as the choice $c$, is considered to be equally probable. The probability of success is then given by:
\begin{align}
 P \big(b=a^{(c)} \big| a^{(0)},a^{(1)},c \big) = |\langle\psi_d|\psi_e\rangle|^2.
\end{align}

There are two measures of success which we consider. We will see that they coincide for $d=2$ but will be different for higher dimensions.The first measure concerns the average success of the protocol over the choice $c$ and the strings $a^{(0)},a^{(1)}$,
\begin{align}
 P_{\text{avg}} &= \frac{1}{2d^4} \sum^{1}_{c=0}\sum^{d^2}_{a^{(0)}=0}\sum^{d^2}_{a^{(1)}=0} P(b=a^{(c)}|a^{(0)},a^{(1)},c).\label{Pavg} 
\end{align}

The second measure of success is defined similarly to the (Q)RAC problem, where any string may be decoded with the probability of success \emph{at least} $p$, more formally:
\begin{align}\label{Pmin}
 P_{\text{min}} = \min_{c \in \{0,1\}}\min_{a^{(c)} \in \{0,...,d^2\}} P(b=a^{(c)}|a^{(c)}),
\end{align}
where $P(b=a^{(c)}|a^{(c)})$ is given by 
\begin{align}
 P(b=a^{(c)}|a^{(c)})= \frac{1}{d^2}\sum^{d^2}_{a^{(\bar{c})}=0} P(b=a^{(c)}|a^{(0)},a^{(1)},c),
\end{align}
such that $a^{(\bar{c})}$ is the string not chosen to be decoded.

For the task of $2_{d^2} \xmapsto{p,1_{d^2} } (1_{d} , 1_{d})$, we want the probabilities to be greater than the \emph{trivial strategy} -- within which one would send one of the strings perfectly using qudit dense-coding and simply guess the other. The success for such a protocol would be $P_{\text{min}}=\frac{1}{d^2}$ and $P_{\text{avg}}=\frac{1}{2}\left( 1+ \frac{1}{d^2}\right)$. \\

\subsection{Qudit Dense-Coding}
Note that the class of problems: $2_{d} \xmapsto{p,1_{d}} (1_{d} , 1_{d})$, where two classical digits of base $d$ are encoded over a qudit part of a shared maximally entangled state $\ket{\psi_+}$ can be achieved perfectly due to the application of qudit dense-coding protocol introduced by Liu, Long, Tong and Li in \cite{liu2002general}. 

The qudit dense-coding protocol employs the generalised Pauli matrices which have the following action on the vectors of the computational basis $|k\rangle$:
\begin{equation}\label{XZ}
\begin{aligned}
&X|k\>=|k\oplus 1\>,\\
&Z|k\>=\operatorname{exp}\left(\frac{2\pi \operatorname{i}k}{d}\right)|k\>,
\end{aligned}
\end{equation}
such that $X^d=\mathbf{1}=Z^d$ and encodes two digits of base $d$ or $d^2$ inputs on the maximally entangled state as follows:
\begin{align}\label{actionU}
|\psi\>=\left(X^{a^{(0)}} Z^{a^{(1)}}\otimes \mathbf{1}\right)|\psi_+\>.
\end{align}
Since the encoded states span an orthonormal basis of size $d^2$, both digits can be sent and decoded perfectly. 

To tackle the difficult class of problems $2_{d^2} \xmapsto{p,1_{d^2}} (1_{d} , 1_{d})$ we will be utilising fractional powers of the generalised Pauli matrices
for Alice's encoding.

\subsection{The case $d=2$} We are now ready to tackle the $2_{d^2} \xmapsto{p,1_{d^2} } (1_{d} , 1_{d})$ for $d=2$, that is, the problem of encoding two 4-dimensional strings $a^{(0)},a^{(1)}$ (since we can write a 4-dimensional string as two bits we have $a^{(0)}\equiv \{a^{(0)}_0,a^{(0)}_1\}$ and $a^{(1)}\equiv \{a^{(1)}_0,a^{(1)}_1\}$ where $a^{(i)}_j$ are bits), sent over a qubit channel with a shared Bell state, where one of the strings is decided by Bob to be decoded $b=a^{(c)}$ (or $\{b_0,b_1\}=\{a^{(c)}_0,a^{(c)}_1\}$) given choice $c$. This is denoted by $2_4 \xmapsto{p,1_4} (1_{2} , 1_{2})$. We will show that the probability of success is $P_{\text{min}} = P_{\text{avg}} \approx 0.73$. One may also want to decode \emph{any} two of the four bits encoded over the same qubit channel and shared Bell state. This problem is denoted by $4_2 \xmapsto{p,2_2} (1_{2} , 1_{2})$. We now present an explicit protocol which gives a lower bound for the probabilities $P_{\text{min}} = P_{\text{avg}}$. This task ($2_4 \xmapsto{p,1_4} (1_{2} , 1_{2})$) has also been studied in \cite{piveteau2022entanglement} where they show the optimum success is bounded above by $0.75$.

\begin{figure}[h]
 \centering
 \begin{tikzpicture}[scale=1.7,auto,swap]
 \foreach \pos/\name in { {(0,-0.8)/Alice'}, {(2,-0.8)/Bob'},{(-0.7,-1.0)/a_i}}
 \node[vertex3] (\name) at \pos {};
 \foreach \source/ \dest /\weight in { Alice'/Bob'/}
 \path[edge5] (\source) -- node[weight] {$\weight$} (\dest);
 
 \foreach \pos/\name in { {(0,-1)/Alice}, {(2,-1)/Bob}}
 \node[vertex] (\name) at \pos {$\name$};

 \foreach \pos/\name in { {(1,-0.6)/psi }}
 \node[vertex3] (\name) at \pos {$|\psi_+\rangle$};
 
 \foreach \pos/\name in { {(-1.0,-0.8)/a0}}
 \node[vertex3] (\name) at \pos {$\{a^{(0)}_0,a^{(0)}_1\}$};

 \foreach \pos/\name in { {(-1.0,-1.2)/a1}}
 \node[vertex3] (\name) at \pos {$\{a^{(1)}_0,a^{(1)}_1\}$};
 
 \foreach \pos/\name in { {(3.0,-0.8)/b0b1}}
 \node[vertex3] (\name) at \pos {$\{b_0,b_1\}=$};
 
 \foreach \pos/\name in { {(3.0,-1.2)/ac0ac1}}
 \node[vertex3] (\name) at \pos {$\{a^{(c)}_0,a^{(c)}_1\}$};
 
 
 \foreach \pos/\name in { {(2.75,-1.0)/b}}
 \node[vertex3] (\name) at \pos {}; 
 
 \foreach \source/ \dest /\weight in { Alice/Bob/\text{1 Qubit}}
 \path[edge] (\source) -- node[weight] {$\weight$} (\dest);

 \foreach \source/ \dest /\weight in {a_i/Alice/, Bob/b/}
 \path[edge2] (\source) -- node[weight] {$\weight$} (\dest);
 
\end{tikzpicture}
 \caption{Schematic diagram for $d=2$ QRAC-SE problem $2_4 \xmapsto{p,1_4} (1_{2} , 1_{2})$, within which Alice encodes two strings of dimension $d^2=4$ (or two pairs of two bits $\{a^{(0)}_0,a^{(0)}_1\}$ and $\{a^{(1)}_0,a^{(1)}_1\}$) over a qubit part of a Bell pair $|\psi_+\rangle$, sends a qubit to Bob and Bob decodes a string $\{a^{(c)}_0,a^{(c)}_1\}$ of his choice c.}
\label{fig:QRACSE2}
\end{figure}
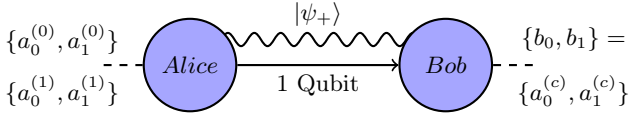
%

Our protocol relies on Alice's encoding using roots of the generalised Pauli matrices: $\sqrt{X}$ and $\sqrt{Z}$. We introduce the 4-dimension encoding strings $e^{(2)}_0$ and $e^{(2)}_1$ defined in Table~(\ref{Tab1}).
Notice that the encoding in $e^{(2)}_i$ follows Gray code ordering in $\{a^{(0)}_i,a^{(1)}_i\}$. One may recall  that Gray codes are also used in QRAC $2\xmapsto{p=0.85}1$ and this is desirable here since we wish to have a coding such that the codes pertaining to each encoded bit are close to each other in distance. While Gray codes are not always available for higher-dimensional bases nonetheless we will continue to use an encoding that employs maximum closeness for all encoded bits through a single distance code. It is important to note here that lexicographical codes always underperform the form of encoding we provide here. 
\begin{table}[h!]
 \begin{center}
 \begin{tabular}{|c|c|}
 \hline
 $e^{(2)}_i$ & $\{a^{(0)}_i,a^{(1)}_i\}$ \\
 \hline
 0 & \{0,0\} \\
 1 & \{0,1\} \\
 2 & \{1,1\} \\
 3 & \{1,0\} \\
 \hline
 \end{tabular}
 \caption{The encoding scheme by Alice involves a map between the two strings $a^{(0)}$ and $a^{(1)}$ and two encoding strings $e^{(2)}_0$ and $e^{(2)}_1$ where the superscript on $e$ denotes that we are considering $d=2$ case.}
 \label{Tab1}
 \end{center}
 \end{table}
 
The encoding by Alice in terms of $e^{(2)}_i$ is given by
 \begin{align}
 \ket{\psi_{e^{(2)}_0,e^{(2)}_1}}= (\sqrt{X})^{e^{(2)}_0}(\sqrt{Z})^{e^{(2)}_1} \otimes \mathbf{1} \ket{\psi_+},
 \end{align}
where $X,Z$ are Pauli matrices such that $X^2=Z^2=\mathbf{1}$. The encoding may also be presented visually 
as in Figure~(\ref{fig:aliceencodingd2}).

\begin{figure}[h!]
 \centering
 \begin{tikzpicture}[scale=1.7,auto,swap]

 \node[vertex3] at (-0.5,0.5) {\small $e^{(2)}_1\rightarrow$};
 \node[vertex3] at (-1,0.3) {\small $e^{(2)}_0\downarrow$};

 \draw (0-0.5,0.25) -- (3+0.5,0.25);
 \draw (-0.5,0+0.25) -- (-0.5,-1.5-0.25);
 \draw (0-0.5,-1.75) -- (3+0.5,-1.75);
 \draw (3.5,0+0.25) -- (3.5,-1.5-0.25);

 \foreach \pos/\name in { {(0,0.5)/00}, {(1,0.5)/01}, {(2,0.5)/11}, {(3,0.5)/10}}
 \node[vertex3] (\name) at \pos {\small $\name$};
 
 \foreach \pos/\name in { {(-1.,0)/00}, {(-1.,-0.5)/01}, {(-1.,-1)/11}, {(-1.,-1.5)/10}}
 \node[vertex3] (\name) at \pos {\small $\name$}; 
 
 \foreach \pos/\name in { {(0,0)/X^{0.0}Z^{0.0}}, {(1,0)/X^{0.0}Z^{0.5}}, {(2,0)/X^{0.0}Z^{1.0}}, {(3,0)/X^{0.0}Z^{1.5}}}
 \node[vertex3] (\name) at \pos {\small $\name$};

 \foreach \pos/\name in { {(0,-0.5)/X^{0.5}Z^{0.0}}, {(1,-0.5)/X^{0.5}Z^{0.5}}, {(2,-0.5)/X^{0.5}Z^{1.0}}, {(3,-0.5)/X^{0.5}Z^{1.5}}}
 \node[vertex3] (\name) at \pos {\small $\name$};
 
 \foreach \pos/\name in { {(0,-1)/X^{1.0}Z^{0.0}}, {(1,-1)/X^{1.0}Z^{0.5}}, {(2,-1)/X^{1.0}Z^{1.0}}, {(3,-1)/X^{1.0}Z^{1.5}}}
 \node[vertex3] (\name) at \pos {\small $\name$};
 
 \foreach \pos/\name in { {(0,-1.5)/X^{1.5}Z^{0.0}}, {(1,-1.5)/X^{1.5}Z^{0.5}}, {(2,-1.5)/X^{1.5}Z^{1.0}}, {(3,-1.5)/X^{1.5}Z^{1.5}}}
 \node[vertex3] (\name) at \pos {\small $\name$};
\end{tikzpicture}
 \caption{Visual representation of Alice's encoding in terms of the generalised Pauli matrices aiding Table~(\ref{Tab1}) for $d=2$}
 \label{fig:aliceencodingd2}
\end{figure}

Bob measures using projectors spanned by the basis:
\begin{align} 
 \ket{\psi_{b_0,b_1}} = \ (X^{(-1)^c b_0+\frac{1-2c}{4}} Z^{(-1)^c b_1+\frac{1-2c}{4}} \otimes \mathbf{1}) \ket{\psi_+} \label{bobmeasured2}
\end{align} where $b_0, b_1 \in \{0,1\}$ are Bob's guesses associated with the relevant projectors. We visually represent these measurements for the case $c=0$ and $c=1$ in figure~(\ref{fig:bobencodingd2}). Note that Bob's measurement basis for $c=0$ and $c=1$ have minimum possible overlap.

\begin{figure}[h]
 \centering
 \begin{tikzpicture}[scale=1.7,auto,swap]
 
 \node[vertex3] at (-0.5,0.5) {\small $e^{(2)}_1\rightarrow$};
 \node[vertex3] at (-1,0.3) {\small $e^{(2)}_0\downarrow$};
 
 \foreach \pos/\name in { {(0.5,-0.25)/b00},{(2.5,-0.25)/b01},{(0.5,-1.25)/b10},{(2.5,-1.25)/b11}}
 \node[fillc0] (\name) at \pos {};

 \foreach \pos/\name in { {(0,0.5)/00}, {(1,0.5)/01}, {(2,0.5)/11}, {(3,0.5)/10}}
 \node[vertex3] (\name) at \pos {\small $\name$};
 
 \foreach \pos/\name in { {(-1.,0)/00}, {(-1.,-0.5)/01}, {(-1.,-1)/11}, {(-1.,-1.5)/10}}
 \node[vertex3] (\name) at \pos {\small $\name$}; 
 
 \foreach \pos/\name in { {(0,0)/X^{0.0}Z^{0.0}}, {(1,0)/X^{0.0}Z^{0.5}}, {(2,0)/X^{0.0}Z^{1.0}}, {(3,0)/X^{0.0}Z^{1.5}}}
 \node[vertex3] (\name) at \pos {\small $\name$};

 \foreach \pos/\name in { {(0,-0.5)/X^{0.5}Z^{0.0}}, {(1,-0.5)/X^{0.5}Z^{0.5}}, {(2,-0.5)/X^{0.5}Z^{1.0}}, {(3,-0.5)/X^{0.5}Z^{1.5}}}
 \node[vertex3] (\name) at \pos {\small $\name$};
 
 \foreach \pos/\name in { {(0,-1)/X^{1.0}Z^{0.0}}, {(1,-1)/X^{1.0}Z^{0.5}}, {(2,-1)/X^{1.0}Z^{1.0}}, {(3,-1)/X^{1.0}Z^{1.5}}}
 \node[vertex3] (\name) at \pos {\small $\name$};
 
 \foreach \pos/\name in { {(0,-1.5)/X^{1.5}Z^{0.0}}, {(1,-1.5)/X^{1.5}Z^{0.5}}, {(2,-1.5)/X^{1.5}Z^{1.0}}, {(3,-1.5)/X^{1.5}Z^{1.5}}}
 \node[vertex3] (\name) at \pos {\small $\name$}; 

 \node at (0.5,-0.25)[circle,fill,inner sep=1pt]{};
 \node at (2.5,-0.25)[circle,fill,inner sep=1pt]{};
 \node at (0.5,-1.25)[circle,fill,inner sep=1pt]{};
 \node at (2.5,-1.25)[circle,fill,inner sep=1pt]{};
 
\end{tikzpicture}
 \begin{tikzpicture}[scale=1.7,auto,swap]
 
 \node[vertex3] at (-0.5,0.5) {\small $e^{(2)}_1\rightarrow$};
 \node[vertex3] at (-1,0.3) {\small $e^{(2)}_0\downarrow$};
 
 \foreach \pos/\name in { {(0,0.5)/0,0}, {(1,0.5)/0,1}, {(2,0.5)/1,1}, {(3,0.5)/1,0}}
 \node[vertex3] (\name) at \pos {\small $\name$};
 
 \foreach \pos/\name in { {(-1.,0)/0,0}, {(-1.,-0.5)/0,1}, {(-1.,-1)/1,1}, {(-1.,-1.5)/1,0}}
 \node[vertex3] (\name) at \pos {\small $\name$}; 

 \node at (-0.39,0.19)[circle,fill,inner sep=1pt]{};
 \node at (1.5,0.19)[circle,fill,inner sep=1pt]{};
 \node at (-0.39,-0.75)[circle,fill,inner sep=1pt]{};
 \node at (1.5,-0.75)[circle,fill,inner sep=1pt]{}; 
 
 \clip (-0.4,-1.7) rectangle (3.4,0.2);
 
 \foreach \pos/\name in { {(1.5,-0.75)/b11},
 {(-0.5,-0.75)/b10_1},{(3.5,-0.75)/b10_2},
 {(1.5,0.25)/b01_1},{(1.5,-1.75)/b01_2},
 {(-0.5,-1.75)/b00_1},{(3.5,-1.75)/b00_2},{(-0.5,0.25)/b00_3},{(3.5,0.25)/b00_4}}
 \node[fillc1] (\name) at \pos {};
 
 \foreach \pos/\name in { {(0,0)/X^{0.0}Z^{0.0}}, {(1,0)/X^{0.0}Z^{0.5}}, {(2,0)/X^{0.0}Z^{1.0}}, {(3,0)/X^{0.0}Z^{1.5}}}
 \node[vertex3] (\name) at \pos {\small $\name$};

 \foreach \pos/\name in { {(0,-0.5)/X^{0.5}Z^{0.0}}, {(1,-0.5)/X^{0.5}Z^{0.5}}, {(2,-0.5)/X^{0.5}Z^{1.0}}, {(3,-0.5)/X^{0.5}Z^{1.5}}}
 \node[vertex3] (\name) at \pos {\small $\name$};
 
 \foreach \pos/\name in { {(0,-1)/X^{1.0}Z^{0.0}}, {(1,-1)/X^{1.0}Z^{0.5}}, {(2,-1)/X^{1.0}Z^{1.0}}, {(3,-1)/X^{1.0}Z^{1.5}}}
 \node[vertex3] (\name) at \pos {\small $\name$};
 
 \foreach \pos/\name in { {(0,-1.5)/X^{1.5}Z^{0.0}}, {(1,-1.5)/X^{1.5}Z^{0.5}}, {(2,-1.5)/X^{1.5}Z^{1.0}}, {(3,-1.5)/X^{1.5}Z^{1.5}}}
 \node[vertex3] (\name) at \pos {\small $\name$};
 
 \node at (-0.39,0.19)[circle,fill,inner sep=1pt]{};
 \node at (1.5,0.19)[circle,fill,inner sep=1pt]{};
 \node at (-0.39,-0.75)[circle,fill,inner sep=1pt]{};
 \node at (1.5,-0.75)[circle,fill,inner sep=1pt]{}; 
 
\end{tikzpicture}
 \caption{Visual representation of Bob's decoding for choice $c=0$ (above) and $c=1$ (below). The black dots correspond to the projectors and thus four measurement outcomes and the rectangular shaded region displays which encoded states are associated with with measurement outcomes given by Eq. \eqref{bobmeasured2}}
\label{fig:bobencodingd2}
\end{figure}

Due to the symmetry between the cases $c=0$ and $c=1$, as well as between the bits, each term is equal to 
\begin{align}
 P(a^{(c)}_0,a^{(c)}_1)=|\bra{\psi_+}X^{\pm0.25}Z^{\pm0.25}\ket{\psi_+}|^2, \forall c,a^{(c)}_i
\end{align}
 and thus $P_{\text{avg}}=P_{\text{min}}= \left(\frac{1}{2} \left( 1 + \frac{1}{\sqrt{2}}\right)\right)^2 \approx 0.73$.
 
\noindent
Note that at Bob's end one may \emph{also} recover $\{a^{(0)}_0,a^{(1)}_1\}$, with measurement $(X^{b_0+\frac{1}{4}} Z^{b_1-\frac{1}{4}} \otimes \mathbf{1}) \ket{\psi_+}$ or $\{a^{(0)}_1,a^{(1)}_0\}$, with measurement $(X^{b_0-\frac{1}{4}} Z^{b_1+\frac{1}{4}} \otimes \mathbf{1}) \ket{\psi_+}$. We utilise this by considering the related problem of $4_2 \xmapsto{p,2_2} (1_{2} , 1_{2})$, which is harder than $2_4 \xmapsto{p,1_4} (1_{2} , 1_{2})$. Here we have four bits $\{a_0,a_1,a_2,a_3\}$ (previously we had $\{a^{(0)}_0,a^{(0)}_1,a^{(1)}_0,a^{(1)}_1\}$) and we wish to guess any two of these four bits. If we use the same encoding we used for $2_4 \xmapsto{p,1_4} (1_{2} , 1_{2})$ we have $P_{a_0,a_1}=P_{a_2,a_3}=P_{a_0,a_3}=P_{a_1,a_2}\approx0.728$. Since ${a_0,a_2},{a_1,a_3}$ are encoded as rows and columns, we choose to measure one of the bits as determined by the above measurements and guess the other to get $P_{a_0,a_2}=P_{a_1,a_3}\approx0.364$. Thus we have $P_{\text{avg}}\approx0.607$ and $P_{\text{min}}\approx0.364$. For the trivial strategy let us say that one sends $a_0,a_1$ perfectly and guesses the rest, then we have $P_{a_0,a_1}=1, P_{a_0,a_2}=P_{a_0,a_3}=P_{a_1,a_2}=P_{a_1,a_3}=0.5,P_{a_2,a_3}=0.25$ and thus $P_{\text{avg}}\approx0.542$ and $P_{\text{min}}=0.25$. 

\subsection{QRAC-SE for Boolean functions}

The problem of $f-$RAC and its variations ($f-$QRAC, $f-$EARAC) were studied in \cite{doriguello2021quantum}, where instead of guessing the $n$ bits Alice has, the task is to encode Boolean functions $f:\{0,1\}^k \rightarrow \{0,1\}$ defined over $k$ of $n$ bits that Alice has. We briefly discuss a proof of concept for generalising $f-$QRAC to study $f-$QRAC-SE. Consider the $f-$QRAC $4_2 \xmapsto{p,1_2}(1_{2} , 1_{2})$, where $f:\{0,1\}^3 \rightarrow \{0,1\}$, i.e. $k=3$, gives us four new bits to encode. We can use the same encoding and measurements as given for $4_2 \xmapsto{p,2_2}(1_{2} , 1_{2})$ to map the $f-$QRAC-SE problem of $4_2 \xmapsto{p,1_2}(1_{2} , 1_{2})$ to the QRAC-SE problem of $4_2 \xmapsto{p,1_2} (1_{2} , 1_{2})$ (which is simpler than $4_2 \xmapsto{p,2_2} (1_{2} , 1_{2})$ as well as $2_4 \xmapsto{p,1_4} (1_{2} , 1_{2})$. Then we can guess any bit with $P_{\text{avg}}=P_{\text{min}}\approx0.728$.

\subsection{Higher dimensions}

We have considered some variations of the $d=2$ case for the class of problems $2_{d^2} \xmapsto{p,1_{d^2} } (1_{d} , 1_{d})$, and now discuss a generalisation for higher $d$. Upon exploring higher dimensions solutions we observe that the encoding becomes combinatorially difficult. The solutions do not retain the same symmetry between $c=0,1$ leading to $P_{\text{avg}}\ne P_{\text{min}}$. Nonetheless, we present a method that generalises the protocols described so far that would help to find lower bounds (should they beat the trivial strategy success probabilities) and then provide explicit protocols\footnote{The probabilities are computed using Mathematica codes that can be found at \url{https://github.com/nitica/QRAC-SE}} for $d=3$ and $d=4$. 

The protocol would involve Alice's encoding using $d^{th}$ roots of the generalised Pauli matrices: $\sqrt[d]{X}$ and $\sqrt[d]{Z}$. The goal is to encode two strings $a^{(0)},a^{(1)}$, each of size $d^2$, which can be expressed instead as two strings consisting of 2 digits of base $d$: $a^{(i)}\equiv\{a^{(i)}_0,a^{(i)}_1\}$ where $a^{(i)}_j \in \{0,...,d-1\}$. The incomplete step involves defining the encoding through the map between $d^2$-dimensional strings $e^{(d)}_0,e^{(d)}_1$ and $a^{(0)},a^{(1)}$. We define this map partially using a notion of single distance code in Table~(\ref{Tab2}), which resemble the Gray codes used for $d=2$. The problem is to complete this encoding chart and provide explicit protocols and bounds. The encoding by Alice in terms of $e^{(d)}_i$ then is given by
 \begin{align}
 \ket{\psi_{e^{(d)}_0,e^{(d)}_1}}= \left( (\sqrt[d]{X})^{e^{(d)}_0}(\sqrt[d]{Z})^{e^{(d)}_1} \otimes \mathbf{1} \right) \ket{\psi_+},
 \end{align}
where $X,Z$ are Pauli matrices such that $X^d=Z^d=\mathbf{1}$.

\begin{table}
 \begin{center}
 \begin{tabular}{|c|c|}
 \hline
 $e^{(d)}_i$ & $\{a^{(0)}_i,a^{(1)}_i\}$ \\
 \hline
 $0$ & $\{0,0\}$ \\
 $\vdots$ & $\vdots$ \\
 $d-1$ & $\{0,d-1\}$ \\
 $d$ & $\{1,d\}$ \\
 $\vdots$ & $\vdots$ \\
 $2d-1$ & $\{1,x\}$ \\
 $2d$ & $\{2,x\}$ \\
 $\vdots$ & $\vdots$ \\
 $d^2-d-1$ & $\{d-2,y\}$ \\
 $d^2-d$ & $\{d-1,y\}$ \\
 $\vdots$ & $\vdots$ \\
 $d^2-1$ & $\{d-1,0\}$ \\
 \hline
 \end{tabular}
 \caption{Table presents $d^2$-dimensional strings $e^{(d)}_0$ and $e^{(d)}_1$. The proposed encoding scheme to be employed by Alice for the general problem from equation~(\ref{eqn:QRACSEd}) involves a map between the two strings $a^{(0)}$ and $a^{(1)}$ and two encoding strings $e^{(d)}_0$ and $e^{(d)}_1$ where the superscript on $e$ denotes that we are considering the general $d$ dimensional case. We subsequently provide explicit encoding for $d=3,4$}
 \label{Tab2}
 \end{center}
 \end{table}

Bob measures using projectors spanned by the basis:
\begin{align}
 \ket{\psi_{b_0,b_1}}\equiv \ (X^{(-1)^c b_0+\frac{1-c}{2}-\frac{1}{2d}} Z^{(-1)^c b_1+\frac{1-c}{2}-\frac{1}{2d}} \otimes \mathbf{1}) \ket{\psi_+},
\end{align}
where $b_0$ and $b_1 \in \{0,\ldots,d-1\}$ are Bob's guesses associated with the relevant projectors. Apart from the encoding by Alice the protocol is completely described. We provide explicit encoding charts by Alice for $d=3$ and $d=4$ below.

\subsubsection{The case $d=3$.}

The protocol relies on Alice's encoding using cube root of the generalised Pauli matrices: $\sqrt[3]{X}$ and $\sqrt[3]{Z}$. We introduce the 9-dimensional strings $e^{(3)}_0$ and $e^{(3)}_1$ defined in Table~(\ref{Tab3}) (left).

\begin{table}
 \begin{center}
 \begin{tabular}{|c|c|}
 \hline
 $e^{(3)}_i$ & $\{a^{(0)}_i,a^{(1)}_i\}$ \\
 \hline
 0 & \{0,0\} \\
 1 & \{0,1\} \\
 2 & \{0,2\} \\
 3 & \{1,2\} \\
 4 & \{1,0\} \\
 5 & \{1,1\} \\
 6 & \{2,1\} \\
 7 & \{2,2\} \\
 8 & \{2,0\} \\
 \hline
 \end{tabular}
 \quad
 \begin{tabular}{|c|c|}
 \hline
 $e^{(4)}_i$ & $\{a^{(0)}_i,a^{(1)}_i\}$ \\
 \hline
 0 & \{0,0\} \\
 1 & \{0,1\} \\
 2 & \{0,2\} \\
 3 & \{0,3\} \\
 4 & \{1,3\} \\
 5 & \{1,0\} \\
 6 & \{1,1\} \\
 7 & \{1,2\} \\
 8 & \{2,2\} \\
 9 & \{2,3\} \\
 10 & \{2,0\} \\
 11 & \{2,1\} \\
 12 & \{3,1\} \\
 13 & \{3,2\} \\
 14 & \{3,3\} \\
 15 & \{3,0\} \\
 \hline
 \end{tabular}
 \caption{Table presents the 9-dimensional strings $e^{(3)}_0$ and $e^{(3)}_1$ (left) and the 16-dimensional strings $e^{(4)}_0$ and $e^{(4)}_1$ (right). The encoding scheme by Alice involves a map between the two strings $a^{(0)}$ and $a^{(1)}$ and two encoding strings $e^{(d)}_0$ and $e^{(d)}_1$ where the superscript on $e$ denotes the dimension $d=3$ (left) and $d=4$ (right)}
 \label{Tab3}
 \end{center}
 \end{table}

The encoding by Alice in terms of $e^{(3)}_i$ is given by
 \begin{align}
 \ket{\psi_{e^{(3)}_0,e^{(3)}_1}}= \left( (\sqrt[3]{X})^{e^{(3)}_0}(\sqrt[3]{Z})^{e^{(3)}_1} \otimes \mathbf{1} \right) \ket{\psi_+},
 \end{align}
where $X,Z$ are the generalised Pauli matrices given by Eq. \eqref{XZ}, which satisfy $X^3=Z^3=\mathbf{1}$.

For $c=0$ Bob measures in the orthogonal basis $\{b_0,b_1\}\equiv \ (X^{b_0+\frac{1}{3}} Z^{b_1+\frac{1}{3}} \otimes \mathbf{1}) \ket{\psi_+}$, where $b_0$ and $b_1 \in \{0,1,2\}$, with the probability of success $P(b=a^{(0)})\approx0.582$. For $c=1$ Bob measures in the orthogonal basis $\{b_0,b_1\}\equiv \ (X^{-b_0-\frac{1}{6}} Z^{-b_1-\frac{1}{6}} \otimes \mathbf{1}) \ket{\psi_+}$, where $b_0$ and $b_1 \in \{0,1,2\}$, with the probability of success $P(b=a^{(1)})\approx0.386$.

\subsubsection{The case $d=4$.}
Similarly as in the 3-dimensional case, the protocol now bases on Alice's encoding via fourth root of the generalised Pauli matrices: $\sqrt[4]{X}$ and $\sqrt[4]{Z}$. We introduce the 16-dimensional strings $e^{(4)}_0$ and $e^{(4)}_1$ defined as in Table~(\ref{Tab3}) (right):

The encoding by Alice in terms of $e^{(4)}_i$ is given by
 \begin{align}
 \ket{\psi_{e^{(4)}_0,e^{(4)}_1}}= ((\sqrt[4]{X})^{e^{(4)}_0}(\sqrt[4]{Z})^{e^{(4)}_1}\otimes\mathbf{1})\ket{\psi_+},
 \end{align}
where $X,Z$ are given by Eq. \eqref{XZ} and $X^4=Z^4=\mathbf{1}$.

For $c=0$ Bob measures in the orthogonal basis $\{b_0,b_1\}\equiv \ (X^{b_0+\frac{3}{8}} Z^{b_1+\frac{3}{8}} \otimes \mathbf{1}) \ket{\psi_+}$, where $b_0$ and $b_1 \in \{0,1,2,3\}$, with the probability of success $P(b=a^{(0)})\approx0.629$. For $c=1$ Bob measures in the orthogonal basis $\{b_0,b_1\}\equiv \ (X^{-b_0-\frac{1}{8}} Z^{-b_1-\frac{1}{8}} \otimes \mathbf{1}) \ket{\psi_+}$, where $b_0$ and $b_1 \in \{0,1,2,3\}$, with the probability of success $P(b=a^{(1)})\approx0.261$.

\subsubsection{QRAC-SE summary}

To summarise, for the class of problems $2_{d^2} \xmapsto{p,1_{d^2} } (1_{d} , 1_{d})$  we found lower bounds for two different measures of success probability, through explicit protocols. The results are summarised in Table~(\ref{Tab4}). 
\begin{table}[h!]
\begin{center}
 \begin{tabular}{|c||c|c|c|c|}
 \hline
 $d$ & $P_{\text{min}}$ & $P^{\text{trivial}}_{\text{min}} = \frac{1}{d^2}$ & $P_{\text{avg}}$ & $P^{\text{trivial}}_{\text{avg}}=\frac{1}{2} (1+\frac{1}{d^2})$ \\
 \hline
 2 & $0.728$ & $0.250$ & $0.728$ & $0.625$ \\
 3 & $0.424$ & $0.111$ & $0.539$ & $0.556$ \\
 4 & $0.261$ & $0.063$ & $0.445$ & $0.531$ \\
 \hline
 \end{tabular}
 \caption{The lower bounds for probabilities (given by Eqs. \eqref{Pmin} and \eqref{Pavg}) for the class of problems QRAC-SE $2_{d^2} \xmapsto{p,1_{d^2} } (1_{d} , 1_{d})$, 
 for $d=2,3,4$.}
 \label{Tab4}
 \end{center} 
 \end{table} 
 
Note that for $d=2$, the probabilities $P_{\text{avg}}=P_{\text{min}}$ coincide, but for higher $d$ the trivial strategy performs better for $P_{\text{avg}}$, while the provided protocols perform better for $P_{\text{min}}$. While $P_{\text{avg}}$ measures how well does a protocol perform overall, $P_{\text{min}}$ is more focused at minimising the error. This explains the need for multiple measures of success probabilities and in different situations different success measures may be relevant. 
 
We have also established some further lower bounds, presented in Table~(\ref{T6}), for other variants with $d=2$.
 \begin{table}[h!]
 \begin{center}
 \begin{tabular}{|c||c|c|c|c|}
 \hline
  & $P_{\text{min}}$ & $P^{\text{trivial}}_{\text{min}} $ & $P_{\text{avg}}$ & $P^{\text{trivial}}_{\text{avg}}$ \\
 \hline
 $2_4 \xmapsto{p,1_4} (1_{2} , 1_{2})$ & $0.728$ & $0.250$ & $0.728$ & $0.625$ \\
 $4_2 \xmapsto{p,2_2} (1_{2} , 1_{2})$ & $0.364$ & $0.250$ & $0.604$ & $0.542$ \\
 $4_2 \xmapsto{p,1_2} (1_{2} , 1_{2})$ & $0.728$ & $0.5$ & $0.728$ & $0.75$ \\
 $4_2 \xmapsto{p,1_2 (f)} (1_{2} , 1_{2})$ & ,, & ,, & ,, & ,, \\
 \hline
 \end{tabular}
 \caption{The lower bounds for the success probabilities quantifying some variations of QRAC -SE for $d=2$, including the proof of concept for $f-$QRAC-SE (referred to as $4_2 \xmapsto{p,1_2 (f)} (1_{2} , 1_{2})$).}
 \label{T6}
 \end{center} 
 \end{table}
 
\section{\label{sec:NSQRACwithQRACSE}Quantum Bound for NS-QRAC using QRAC-SE}

Let us present a variation of the NS-QRAC discussed in Sec. \ref{sec:NSQRAC}. As previously, Alice and Bob share some no-signalling resource and Alice has two qudits $\psi_1, \psi_2$ at her disposal, but now Alice communicates to Bob one qudit $\phi$ instead of $2\log(d)$ bits. Bob, using the received data aims at reproducing the qudit of his choice $\rho_{x}$, with $x \in \{1,2\}$ parametrising his decision --- see Figure~(\ref{fig:modifiedNSQRAC}).

One may regard this game as a `truly quantum' NS-QRAC scenario in the sense that we are now trying to encode two qudits in one qudit --- exactly as in the classical RAC, where we encode many bits in one bit of information. 

\begin{figure}[h!]
 \centering
 \includegraphics[width=0.5\linewidth]{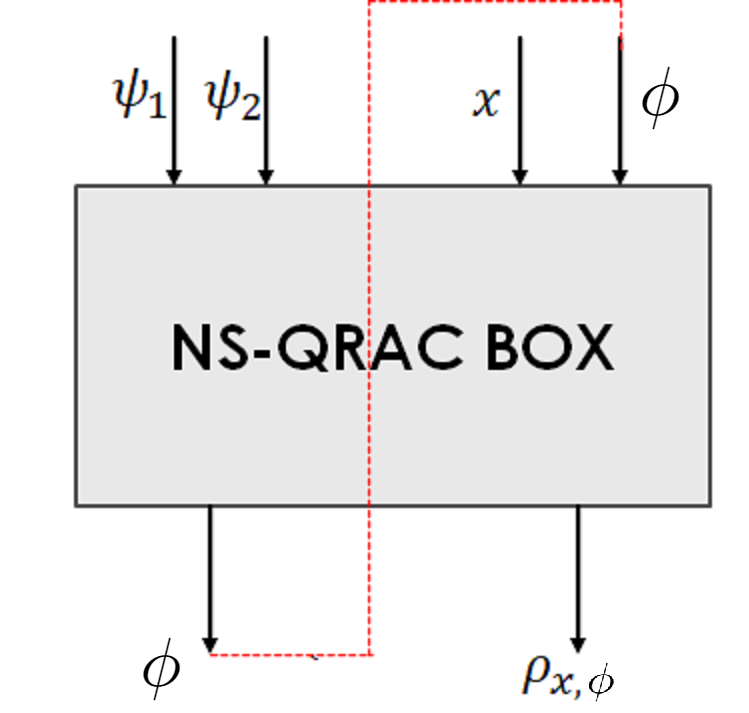}
 \caption{
 In the quantum analogue Bob seeks to reproduce at his output $\rho$ one of Alice's qubit, $\psi_1$ or $\psi_2$, of his choice $x \in \{1,2\}$ given one qubit of information $\phi$ from Alice and a NS-QRAC box \cite{grudka2015nonsignaling}.  }
 \label{fig:modifiedNSQRAC}
\end{figure}

We consider the strategy when Alice and Bob share three maximally entangled states through the NS-QRAC box. Alice performs the usual teleportation measurement with $d^4$ outcomes and wishes to encode these for Bob. At this point Alice and Bob share one maximally entangled state, as well as one qudit channel from Alice to Bob. Alice employs the QRAC-SE $2_{d^2} \xmapsto{p,1_{d^2}} (1_{d} , 1_{d})$ from Sec.~\ref{sec:QRAC-SE} to encode the $d^4$-inputs, which is two strings of two digits of base $d$. Bob decodes this information to guess $\psi_x$, with his choice of $x$.

For the qubit case Bob can guess $\psi_x$ with the fidelity that coincide with that of $2_4 \xmapsto{p,1_4} (1_{2} , 1_{2})$ QRAC-SE, that is $P_{\text{succ}}=0.728$, which is better than the trivial strategy $P_{\text{avg}}^{\text{trivial}} >0.625$ (see Table~\ref{Tab4}). Incidentally, the latter is equal to the lower bound \eqref{eqn20} for the NS-QRAC with a classical communication channel. It is not clear, whether such a strategy can offer an improvement for higher $d$.

\section{\label{sec:constrainedententanglement}A constrained entanglement scenario}

One may consider another modification to the NS-QRAC scenario. Now, instead of constrained classical communication and unbounded entanglement resource, we can consider restricted entanglement of size $d \times d$ and unconstrained classical communication. We call this modification the CNS-QRAC (Constrained-No-Signalling Quantum Random Access Code). In this context we ask the question: Can we compute, or at least find a reasonable upper bound, on the probability of success \eqref{N2}?

We show that this is indeed possible. We shall work in a `distant-laboratories' paradigm, which assumes no interaction between the parties and only pre-shared quantum correlations are allowed. We leave open the question of optimal transmission via general quantum no-signalling maps (see \cite{Piani}), where interaction is allowed but the no-signalling property is retained.

Consequently, the shared state together with the local data of Alice and Bob can at most be subject to a product of generalised measurements.
This can be simulated by a product of local unitaries (isometries), followed by measurements. Finally, any shared mixed state can be purified through an environment $E$ that can always be incorporated via some isometry for Alice or Bob. 
We thus arrive at the following:
 \begin{observation}
 \label{thm:main1}
 The quantum transmission of a CNS-QRAC box in a distant-laboratories paradigm can be effectively simulated by the scenario with a shared quantum state of dimension $d \times d$ dimension with local operations and classical communication from Alice to Bob. 
 In fact, it is enough for parties to use pure states only.
 \end{observation}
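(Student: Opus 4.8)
The plan is to begin from the most general strategy permitted in the distant-laboratories paradigm and reduce it to the claimed canonical form by three successive moves. First I would pin down what ``quantum transmission of the box'' can mean here: since no interaction is allowed and all correlations must come from pre-shared resources, the only quantum object linking the two labs is a bipartite state $\rho_{AB}$ of the constrained dimension $d\times d$. Alice's input qudits $\psi_1,\psi_2$ and Bob's choice $x$ are merely local data, and each party may apply an arbitrary local quantum instrument to its half of $\rho_{AB}$ together with its own data, supplemented by one unconstrained one-way classical channel $A\to B$. Because there is no quantum channel between the labs, the joint operation is a \emph{product} of local instruments, the only coupling being that Bob's instrument may depend on the classical symbol received from Alice; this is exactly local operations assisted by one-way classical communication.

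Second I would dilate the local generalised measurements into isometries followed by projective measurements. By Naimark's theorem (equivalently, a Stinespring dilation of each local instrument), any POVM or instrument on a party's side can be realised by adjoining a local ancilla in a fixed pure state, applying a local unitary/isometry, and performing a projective measurement, possibly discarding an output register. Carrying out this dilation independently in each lab converts the product of local instruments into a product of local isometries followed by local projective measurements — the intermediate form sketched above.

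Third, and this is where the dimension constraint must be treated carefully, I would purify the shared mixed state. Write $\rho_{AB}=\tr_E\,|\Psi_{ABE}\rangle\langle\Psi_{ABE}|$ for some purification and assign the entire environment $E$ to one lab, say Alice's, as an additional local ancilla; this is legitimate because all purifications of $\rho_{AB}$ differ only by an isometry acting on $E$, so $E$ need never be shared. The resulting shared pure state is $|\Psi\rangle_{(AE)B}$, and the crucial observation is that its Schmidt rank across the $AE\mid B$ cut equals $\operatorname{rank}\rho_B\le\dim B=d$, so it admits a Schmidt decomposition
\[|\Psi\rangle_{(AE)B}=\sum_{i=1}^{r}\sqrt{\lambda_i}\,\ket{u_i}_{AE}\ket{v_i}_{B},\qquad r\le d.\]
The $r$ vectors $\ket{u_i}_{AE}$ span a subspace of dimension at most $d$, so a further local isometry on Alice's side compresses her half back to a $d$-dimensional register. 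Hence the genuinely \emph{shared} entanglement stays of size $d\times d$, even though Alice may keep arbitrarily large local workspace. Folding this purification-and-compression isometry into Alice's dilating isometry from the second step produces precisely the claimed strategy: a pure shared state of dimension $d\times d$, local isometries, projective measurements, and one-way classical communication $A\to B$; in particular pure states suffice.

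The hard part will be this third step: one must check that purifying a $d\times d$ mixed state and storing the environment locally does not smuggle in extra shared entanglement. The identity $\operatorname{rank}\rho_B\le d$ is what rescues the bound, guaranteeing that the entanglement across the true Alice$\mid$Bob cut is still carried by a $d$-dimensional subspace on each side, while the remainder of Alice's enlarged space is uncorrelated local ancilla and therefore free. A secondary point to verify is that the one-way $A\to B$ structure survives each reduction — it does, since neither the dilation nor the purification introduces any $B\to A$ signalling, in keeping with the no-signalling character of the box.
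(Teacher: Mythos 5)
Your argument is correct and follows the same three moves the paper itself sketches in the text immediately preceding the Observation: in the distant-laboratories paradigm the parties can only apply a product of generalised measurements to the shared state and their local data (with one-way classical communication as the sole coupling), these measurements are dilated into local isometries followed by projective measurements, and the shared mixed state is purified with the environment $E$ absorbed into one lab by a local isometry. The one genuine addition in your write-up is the bookkeeping in the third step: you verify, via the Schmidt decomposition across the $AE\mid B$ cut, that the purified state has Schmidt rank at most $\operatorname{rank}\rho_B\le d$, so a further local isometry on Alice's side compresses the shared pure state back to $d\times d$, keeping the constrained dimension manifest. The paper does not make this explicit inside the Observation; instead it tolerates a pure state $|\psi\rangle_{\widetilde{A}B}$ on a $(d'\cdot d)$-dimensional space in the proof of Proposition~\ref{bipartite} and repairs the dimension afterwards, arguing that the operators may be restricted to a $d$-dimensional subspace of $B$ because the target fidelity is with $\psi^{+}_{A'_iB}$ (unitarily equivalent to a $d\times d$ state), positivity only decreasing the success probability, with mixed states recovered by convexity. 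Your compression argument is a cleaner, self-contained justification of the ``pure states of dimension $d\times d$ suffice'' clause; both routes are sound, yours closing in place a point the paper defers to the fidelity estimate.
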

The above observation shows that a CNS-QRAC box can be viewed as a quantum channel with LOCC action. This means that all properties of the box have to obey the laws of quantum mechanics, in particular the monogamy relations for entanglement. 
However, the `monogamy' in this picture has a slightly different meaning than commonly adopted, see Figure~(\ref{cloningbip2}).
 \begin{figure}[h!]
  \includegraphics[width=1.0\linewidth]{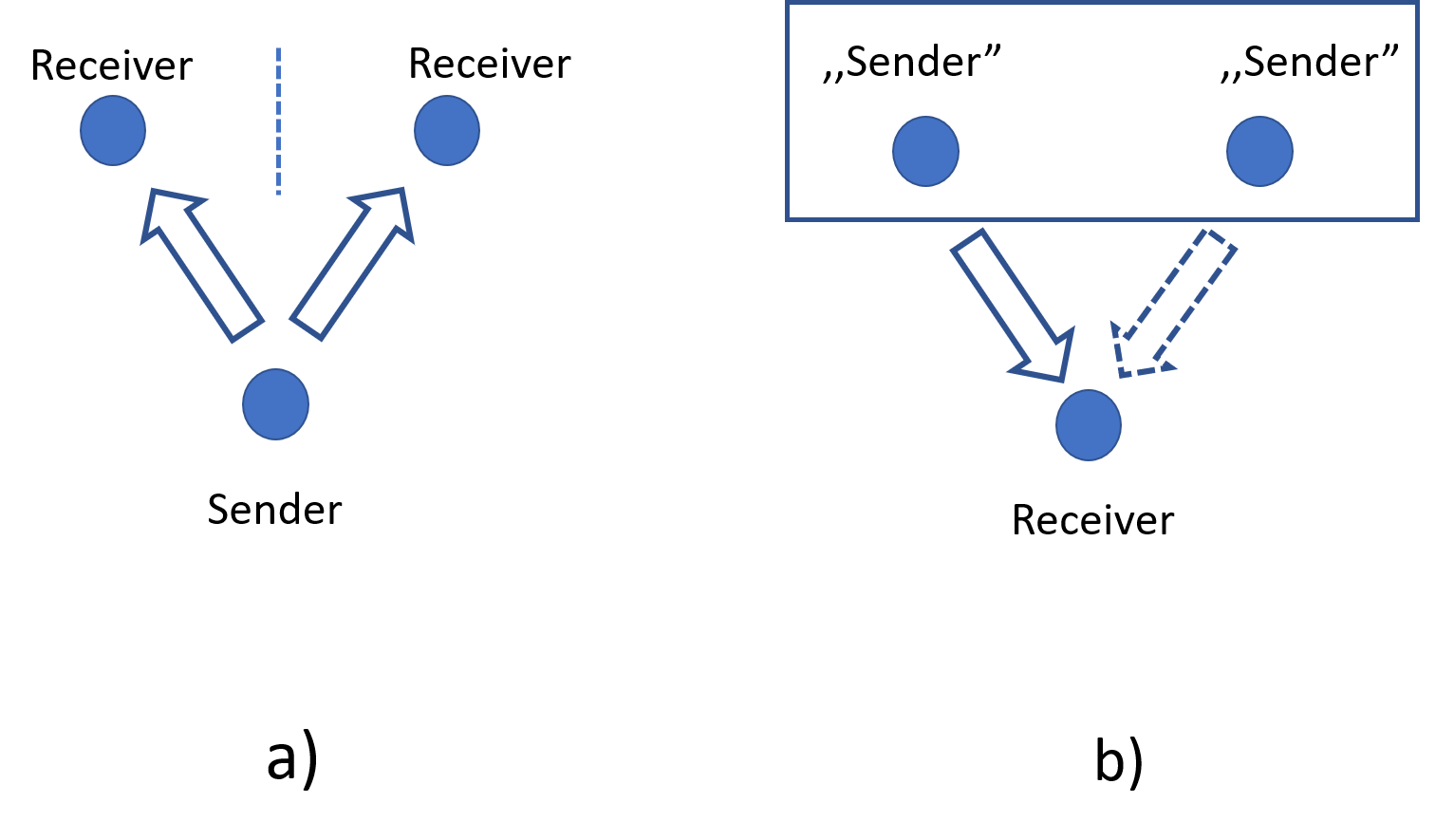}
  \caption{\label{cloningbip2} 
The usual quantum information scenario, where the monogamy plays a key role is depicted in panel a) where two spatially separated receivers are going to receive the same quantum information. 
The present scenario, b), is different -- we can interpret it as a situation where two senders are going to 
send different quantum information of a given dimension $d$ to a single receiver, the memory 
of which is restricted to a single system of the dimension $d$. However, the senders cooperate quantumly 
(their particles can interact quantumly) and only one of them is supposed to succeed at a time, or – in other 
words – only an alternative of the two successes is required (which is depicted
by the dashed contour of one of the arrows). Despite significant differences, the monogamy relation 
is known well to work in a) also bounds quantum transmission in b). The open question remains whether the latter bound 
can be saturated.
 }
 \end{figure}

Having turned the CNS-QRAC into a monogamy relations problem, we can ask what is the maximal bipartite entanglement which one can teleport from $A_1'A_2':A_1A_2$ to $A_1'A_2':B$, see Figure~\ref{cloningbip}. 
 \begin{figure}[h!]
  \includegraphics[width=1.0\linewidth]{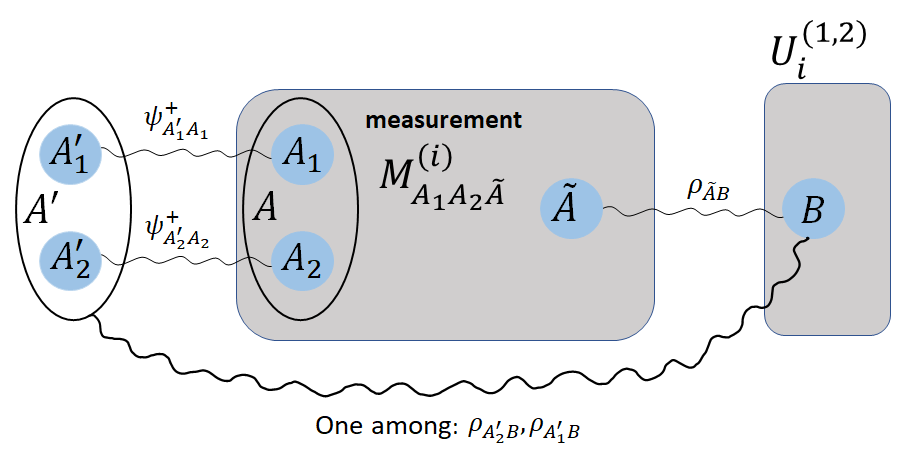}
  \caption{\label{cloningbip}The schematic description of multi-teleportation scheme with an unconstrained classical channel. In this situation, two parties, Alice and Bob, share a $d-$dimensional pure state $\rho_{\A B}$ and Alice wishes to send to Bob one of the states $A'_1$ or $A'_2$. Then Bob chooses which state of these two he wants to recover by applying respective unitaries $U_B^{(i,1)}, U_B^{(i,2)}$ depending on classical messages, indexed by $i$, sent by Alice. This scheme can naturally be extended to $N$ states on Alice's side. The presented scheme allows us to find an upper bound on $P_{\text{succ}}^{\text{QM}}$ in the CNS-QRAC game in terms of monogamy relations for entanglement. }
 \end{figure}
 The monogamy relations for entanglement were intensively studied in the context of universal quantum cloning machines~\cite{WernerCloning,kay2009optimal}, and we shall exploit these results here. We start by proving a result concerning a bipartite scenario, as depicted in Figure~\ref{cloningbip}.

\subsection{The quantum bound for CNS-QRAC} 
 We shall start this section with the following proposition: 
 \begin{prop}
 \label{bipartite}
 The probability of success $P_{\text{succ}}^{\text{QM}}$ in the symmetric CNS-QRAC scenario with two $d-$dimensional inputs $\psi_1,\psi_2$ satisfies the following bound:
 \begin{equation}
 P_{\mathrm{succ}}^{\mathrm{QM}}\leq \frac{d+1}{2d}.
 \end{equation}
 In the particular case of qubits we have $P_{\text{succ}}^{\text{QM}}\leq 3/4$.
 \end{prop}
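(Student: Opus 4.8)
The plan is to reduce the CNS-QRAC to a fixed bipartite resource and then to bound the two transmission fidelities simultaneously by a monogamy (no-cloning) inequality borrowed from the theory of universal cloning machines~\cite{WernerCloning,kay2009optimal}. First I would apply Observation~\ref{thm:main1} to replace the box by a shared pure state $|\Psi\rangle_{\A B}$ of dimension $d\times d$ together with local operations and classical communication from Alice to Bob. To pass to the entanglement-fidelity language of Section~\ref{sec:tele}, I would purify each input, maximally entangling the carrier $A_x$ with a reference $A_x'$; for a pure input the quantum fidelity $F(\psi_x,\rho_x)$ then equals the overlap of Bob's reconstruction with $\psi^+_{A_x'B}$, and averaging over inputs (equivalently, twirling by $U\otimes U^*$ as in the derivation of $f=\frac{F(\Lambda)d+1}{d+1}$) lets me treat each $F_x$ as a singlet fraction and write $P_{\text{succ}}^{\text{QM}}=\tfrac12(F_1+F_2)$.

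The heart of the argument is a monogamy bound. Once Alice's measurement returns a classical outcome $i$, Bob holds a single $d$-dimensional system $B$ and the message $i$; for choice $x$ he applies the correction $U_B^{(i,x)}$ and outputs a state with overlap $F_x^{(i)}$ against $\psi^+_{A_x'B}$. The essential constraint is that this \emph{one} qudit must be decodable towards $A_1'$ \emph{or} $A_2'$, which is exactly the statement that a single $d$-dimensional system cannot be simultaneously maximally entangled with two distinct references. Fixing $i$, I would relax to an arbitrary tripartite state $\omega_{A_1'A_2'B}$ with $\dim B=d$ and invoke the optimal symmetric $1\to2$ universal-cloning bound, which caps the singlet fractions at $F_1^{(i)}+F_2^{(i)}\le \frac{d+1}{d}$; the symmetric point $F=\frac{d+1}{2d}$ is precisely the singlet fraction of the optimal symmetric cloner. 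Convexity propagates this through the average over $i$, giving $F_1+F_2\le\frac{d+1}{d}$ and therefore $P_{\text{succ}}^{\text{QM}}\le\frac{d+1}{2d}$, which specialises to $3/4$ at $d=2$.

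The main obstacle is to justify using the cloning bound in the \emph{reversed} form advertised in Figure~\ref{cloningbip2} (two senders, one receiver), and in particular to show that the unbounded classical channel buys nothing beyond the quantum monogamy constraint. I expect to handle this by conditioning on the outcome $i$ and appealing to convexity, so that only genuine $d$-dimensional quantum correlations between $B$ and the references are ever exploited, and by restricting without loss to covariant protocols so that the per-input fidelity is input-independent and the singlet-fraction optimisation coincides with the symmetric cloning optimum. A final check is that relaxing the achievable states $\omega_{A_1'A_2'B}$ to \emph{all} tripartite states with $\dim B=d$ only weakens the inequality, so the resulting cap is a bona fide upper bound for the more constrained CNS-QRAC protocols.
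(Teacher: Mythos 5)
Your proposal is correct and follows essentially the same route as the paper: reduce via Observation~\ref{thm:main1} to a shared $d\times d$ state with LOCC, recast $P_{\text{succ}}^{\text{QM}}$ as an average of two singlet fractions against references $A_1',A_2'$, condition on Alice's outcome, and invoke the symmetric $1\to2$ universal-cloning monogamy bound of~\cite{WernerCloning} together with the relation $f=\frac{dF+1}{d+1}$ to get $F_{\max}=\frac{d+1}{2d}$. The only cosmetic difference is that you bound the sum $F_1^{(i)}+F_2^{(i)}$ per outcome and average by convexity, whereas the paper normalises the POVM elements into tripartite states and maximises over outcomes before applying the same cloning result.
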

 
 \begin{proof}
 For simplicity, we shall first present the proof for the case when the action CNS-QRAC comes from local operations on a shared maximally entangled state. Finally, we will show how the argument generalises for an arbitrary shared quantum state. Consequently, for the time being, let us assume that Alice and Bob share a maximally entangled state $\rho_{\A B}=|\psi^+\>\<\psi^+|_{\A B}$, where $| \psi^{+}_{\A B} \rangle = (1/\sqrt{d})\sum_{i=1}^{d} |i \rangle_{\A} |i \rangle_B$. Alice wishes to send one state, $A_1$ or $A_2$, to Bob (see figure~\ref{cloningbip}). Using the notation $A \vc A_1A_2$ and $A' \vc A'_1A'_2$, we write $\psi^+_{A'A}= \psi_{A'_1A_1}^+\otimes \psi_{A'_2A_2}^+$ to simplify the notation. Our goal is to estimate the entanglement fidelity $F$ of the whole process which, due to simulation argument from Observation~\ref{thm:main1} and discussion below, is equal to $P_{\text{succ}}^{\text{QM}}$, so: 
 \begin{align}
 \label{eq11}
&\!\!\! P_{\text{succ}}^{\text{QM}}=\frac{1}{2} (F_{A'_1B} + F_{A'_2B}) \\
&= \frac{1}{2} ( \tr[ \psi^{+}_{A'_1 B} \rho_{A'_1 B} ] + \tr[ \psi^{+}_{A'_2 B} \rho_{A'_2 B} ] ) \notag\\
&=\frac{1}{2}\sum_{i=1}^{K} \operatorname{tr}\left[\psi^{+}_{A'_1B} U_B^{(i,1)} M_{A\A}^{(i)}\left( \psi^+_{A'A} \otimes \rho_{\A B}\right)U_B^{(i,1)\dagger}\right] \notag\\
&\quad + \frac{1}{2}\sum_{i=1}^{K} \operatorname{tr}\left[\psi^{+}_{A'_2B}U_B^{(i,2)} M_{A\A}^{(i)}\left( \psi^+_{A'A} \otimes \rho_{\A B}\right)U_B^{(i,2)\dagger}\right]. \notag
\end{align}
Here K as before is the index for the POVM element that is communicated by Alice to Bob who then conditions the choice of his local unitary using this. Since we have unbounded classical communication CNS-QRAC, K can be arbitrary. For clarity let us focus on the first term in the last line of~\eqref{eq11}, since the second one is analogous. Using property \eqref{eq3} we can follow the same line of argumentation as in the proof of Proposition~\ref{coinstrainedF} getting
\begin{equation}
\begin{split}
&\sum_{i=1}^{K} \operatorname{tr}\left[\psi^{+}_{A'_1B} U_B^{(i,1)} M_{A\A}^{(i)} \left( \psi^+_{A'A} \otimes \rho_{\A B}\right)U_B^{(i,1)\dagger}\right] \\
&=\frac{1}{d^3}\sum_{i=1}^{K} \operatorname{tr}\left[( U_B^{(i,1)\dagger}\psi^{+}_{A'_1B} U_B^{(i,1)} ) M_{A'B}^{(i)t}\right]\\
&= \frac{1}{d^3}\sum_{i=1}^{K} \operatorname{tr}\left[\psi^{+,(i)}_{A'_1B} M_{A'B}^{(i)t}\right],
\end{split}
\end{equation}
where $\psi^{+,(i)}_{A'_1B}=U_B^{(i,1)\dagger}\psi^{+}_{A'_1B} U_B^{(i,1)}$, and $t$ denotes transposition.
Thus, we arrive at the following expression:
\begin{multline}
\label{someent}
P_{\text{succ}}^{\text{QM}}=\frac{1}{2d^3}\Bigg(\sum_{i=1}^{K}\operatorname{tr}\left[\psi^{+,(i)}_{A'_1B} M_{A'B}^{(i)t}\right] + \\
+ \sum_{i=1}^{K}\operatorname{tr}\left[\tilde{\psi}^{+,(i)}_{A'_2B} M_{A'B}^{(i)t}\right]\Bigg).
\end{multline}
Introducing tripartite states $\rho^{A'B}_{i}=\frac{M_{A'B}^{(i)t}}{\operatorname{tr}(M_{A'B}^{(i)t})}$
with the factors $\alpha_i=\operatorname{tr}(M_{A'B}^{(i)t})$ satisfying normalisation constrain $\sum_i \alpha_i=d^3$, we rewrite equation~\eqref{someent} as follows:
\begin{align}
\label{alphaeqn}
P_{\text{succ}}^{\text{QM}}&=\frac{1}{2d^3} \sum_{i=1}^{K}\alpha_i\left[\operatorname{tr}(\rho^{{A'B}}_{i} \psi^{+,(i)}_{A'_1B}) +\operatorname{tr}(\rho^{{A'B}}_{i}\tilde{\psi}^{+,(i)}_{A'_2B})\right] \notag \\
&\leq \frac{1}{2d^3} \sum_{i=1}^{K}\alpha_i\left[F_{\max}(\rho^{A'_1 B}_i) +F_{\max}(\rho^{A'_2B}_i) \right],
\end{align}
where for $j=1,2$ we define 
\begin{equation}
F_{\max}(\rho_i^{A'_jB}):=\max_{\psi^{+,(i)}_{A'_jB}} \operatorname{tr}(\rho_i^{A'_jB}\psi^{+,(i)}_{A'_jB}).
\end{equation}
Taking $F_{\max}(\rho^{A'_j B})=\max_i F_{\max}(\rho_i^{A'_jB})$, for $j=1,2$, we simplify expression~\eqref{alphaeqn} to
\begin{equation}
P_{\text{succ}}^{\text{QM}}\leq \frac{1}{2}\left[F_{\max}(\rho^{A'_1 B}) + F_{\max}(\rho^{A'_2B}) \right].
\end{equation}
It can be shown that for any tripartite quantum state $\rho^{A'_1 A'_2 B}$ the sum of the fidelities in the above formula must be strictly smaller than two\footnote{Let us consider a tripartite state $\rho_{ABC}$ with the respective marginals $\rho_{AB},\rho_{AC}$. Now assuming that the overlaps of the marginals with respective maximally entangled sates are equal 1 means that  $\rho_{AB}$ and $\rho_{AC}$ are maximally entangled and pure. From this fact it follows  that the state $\rho_{AB}$ is  product with the system $C$. The same argumentation holds for the state $\rho_{AC}$. The latter however, contradicts that there is no entanglement between $A$ and $C$. From this we conclude that indeed sum of the two fidelities must be strictly smaller than 2.}. This is just a manifestation of the famous quantum entanglement monogamy phenomenon --- the particle $B$ cannot be maximally entangled with $A'_1$ and $A'_2$ at the same time, or equivalently, the two fidelities can not be equal to unity at the same time. In the symmetric case, when all the entanglement fidelities should be equal we can use the result from~\cite{WernerCloning}. Namely, for a universal quantum cloning machine producing $N_2$ clones from $N_1$ input states ($N_1\rightarrow N_2$) the average fidelities of outputs are
\begin{equation}
\label{werner}
f=\frac{N_1}{N_2}+\frac{(N_2-N_1)(N_1+1)}{N_2(N_1+d)}.
\end{equation}
In our case we plug $N_1=1,N_2=2$ obtaining
\begin{equation}
f_{\max}(\rho^{A'_1B})=f_{\max}(\rho^{A'_2B})=\frac{1}{2}+\frac{1}{d+1},
\end{equation}
so we produce the following equality
\begin{equation}
 \frac{1}{2} \big(f_{\max}(\rho^{A'_1B})+f_{\max}(\rho^{A'_2B})\big)= \frac{d+3}{2(d+1)}. 
\end{equation}
 This, together with linear relation between the transmission fidelity $f$ and the entanglement fidelity $F$ from~\cite{HorodeckiMPRFidelity}, which reads $f=\frac{dF+1}{d+1}$, we can establish an upper bound for entanglement fidelity in the bipartite scenario:
\begin{equation}
P_{\text{succ}}^{\text{QM}}\leq \frac{1}{2}\left(F_{\max}(\rho^{A'_1 B}) + F_{\max}(\rho^{A'_2 B})\right)= \frac{d+1}{2d}.
\end{equation}
In the qubit case, when $d=2$ we obtain the upper bound $3/4$ from the statement.
This finishes the proof.
\end{proof}
 Notice that the argumentation presented in the proof of Proposition~\ref{bipartite} holds for any shared pure state $\rho_{\widetilde{A}B}=|\psi\>\<\psi|_{\widetilde{A}B}$ acting on the $(d'\cdot d)$-dimensional space, where
 \begin{equation}
|\psi\>_{\widetilde{A}B}=\sqrt{d}\left(\sqrt{\sigma_{\widetilde{A}}}\otimes \mathbf{1}_{B}\right)|\psi^+\>_{\widetilde{A}B}.
 \end{equation}
 Then, expression~\eqref{someent} reads 
\begin{multline}
\label{F_general}
P_{\text{succ}}^{\text{QM}}=\frac{1}{2d^3}\Bigg(\sum_{i=1}^{K}\operatorname{tr}\left[\psi^{+,(i)}_{A'_1B} \widetilde{M}_{A'B}^{(i)t}\right]+\\
+ \sum_{i=1}^{K}\operatorname{tr}\left[\widetilde{\psi}^{+,(i)}_{A'_2B} \widetilde{M}_{A'B}^{(i)t}\right]\Bigg),
\end{multline}
with $\sum_{i=1}^{k} \widetilde{M}_{A'B}^{(i)t}= \mathbf{1}_{A'}\otimes \sigma_{B}$, where $A'=(A'_1 A'_2)$. Note that the result for an 
arbitrary pure state extends by convexity to any $(d \times d)$-dimensional 
state $\rho_{\widetilde{A}B}$. The argument that the latter can be chosen to be $d \times d$ is straightforward: In fact the $\tilde{d} \times \tilde{d} $ case can easily be reduced to $\tilde{d} \times d$ due to the final fidelity with $\psi^{+}_{A'_i B}$ state, which is unitarily equivalent to a $d \times d$ state. Note that the latter fact means that in general the operators $M^{(i)}_{A\widetilde{A}}$ and $M^{(i)}_{A'B}$ differ. Namely, the matrix representation of the latter is the one of the former but constrained to the $d$-dimensional subspace on the subsystem $B$. However, since they are all positive, this can only decrease the probability of success, 
so we may define all of them on the system $B$ with the dimension $\tilde{d} =d$ and 
by similar argument the shared state $\rho_{\widetilde{A}B}$ can be 
constrained analogously.

\subsection{Multi-party and asymmetric scenarios}
 
 The CNS-QRAC game can be extended to the case of $N$ pure inputs $\psi_1,\ldots,\psi_N$, which are picked with some given probabilities $\{p_i\}_{i=1}^N$. The probability of success then reads
 \begin{align}
 \label{NN}
 P_{\text{succ}}^{\text{QM}} = \sum_{i=1}^N p_i \, F(\psi_i,\rho_i).
 \end{align}
 
 The problem reduces to finding the 
 set of $N$ fidelities in equation~\eqref{NN} by considering the problem of asymmetric universal quantum cloning machine producing $N$ clones from one input state.
 
In the symmetric case, i.e. when $p_i = 1/N$, we can again utilise the results from \cite{WernerCloning} to derive an explicit bound for $P_{\mathrm{succ}}^{\mathrm{QM}}$.

\begin{prop}
\label{multipartite}
 The probability of success $P_{\mathrm{succ}}^{\mathrm{QM}}$ in the symmetric
 CNS-QRAC scenario with $N$ inputs $\psi_1,\ldots,\psi_N$ of dimension $d$ each, satisfies the following bound:
 \begin{equation}\label{PdN}
 P_{\mathrm{succ}}^{\mathrm{QM}}\leq \frac{N+d-1}{dN}.
 \end{equation}
\end{prop}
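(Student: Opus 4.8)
The plan is to follow the template of the proof of Proposition~\ref{bipartite}, which is exactly the case $N=2$, and to replace the two-output monogamy input with the general symmetric $1\to N$ universal cloning bound. First I would apply Observation~\ref{thm:main1} to reduce the CNS-QRAC box to a shared $d\times d$ pure state with one-way LOCC from Alice to Bob, so that the success probability~\eqref{NN} with $p_i=1/N$ is the average of $N$ entanglement fidelities, $P_{\text{succ}}^{\text{QM}}=\frac{1}{N}\sum_{j=1}^{N}F_{A'_jB}$. Treating first the maximally entangled shared state $\psi^+_{\A B}$, I would write each $F_{A'_jB}$ as a sum over the communicated POVM index $i$ of traces $\tr[\psi^+_{A'_jB}U_B^{(i,j)}M^{(i)}_{A\A}(\psi^+_{A'A}\ot\rho_{\A B})U_B^{(i,j)\dagger}]$, with $A'=A'_1\cdots A'_N$ and $A=A_1\cdots A_N$, and apply the ping-pong identity~\eqref{eq3} to transfer the measurement onto the $A'B$ systems.

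After this reduction each fidelity takes the form $\frac{1}{d^{N+1}}\sum_i\tr[\psi^{+,(i)}_{A'_jB}M^{(i)t}_{A'B}]$, the exponent $N+1$ being the dimension count of $A'B$ and the natural generalisation of the $d^3$ appearing for $N=2$. Introducing tripartite states $\rho^{A'B}_i=M^{(i)t}_{A'B}/\alpha_i$ with $\alpha_i=\tr(M^{(i)t}_{A'B})$ and $\sum_i\alpha_i=d^{N+1}$, bounding each overlap by $F_{\max}(\rho^{A'_jB}_i)$ and then by $F_{\max}(\rho^{A'_jB})=\max_iF_{\max}(\rho^{A'_jB}_i)$, the convex combination collapses and yields $P_{\text{succ}}^{\text{QM}}\leq\frac{1}{N}\sum_{j=1}^{N}F_{\max}(\rho^{A'_jB})$. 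The key observation is that the single receiver system $B$ is simultaneously required to reproduce each of the $N$ inputs $A'_j$, which is precisely the $1\to N$ universal cloning task.

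In the symmetric case all $N$ fidelities are equal, so I would import the Werner formula~\eqref{werner} for the symmetric $N_1\to N_2$ cloner with $N_1=1$, $N_2=N$, giving transmission fidelity $f_{\max}=\frac{1}{N}+\frac{2(N-1)}{N(d+1)}=\frac{d+2N-1}{N(d+1)}$. Converting to entanglement fidelity through the linear relation $f=\frac{dF+1}{d+1}$ of~\cite{HorodeckiMPRFidelity} gives $F_{\max}=\frac{N+d-1}{dN}$ for each output, which is the claimed bound. The extension from the maximally entangled state to an arbitrary $d\times d$ pure (and then, by convexity, mixed) shared state proceeds exactly as in the remark after Proposition~\ref{bipartite}, with $\sum_i\widetilde{M}^{(i)t}_{A'B}=\mathbf{1}_{A'}\ot\sigma_B$ replacing the identity.

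The main obstacle is the cloning/monogamy step: one must justify rigorously that the symmetric average $\frac{1}{N}\sum_jF_{\max}(\rho^{A'_jB})$ is controlled by the universal $1\to N$ cloning optimum rather than merely by the trivial value one. For $N=2$ the footnote argument (that $B$ cannot be maximally entangled with both $A'_1$ and $A'_2$) already rules out the value $2$, but for general $N$ one needs the tight symmetric bound, so the delicate part is verifying that the shared-state-plus-LOCC map realising $B$ as an approximate copy of every $A'_j$ falls under the hypotheses of~\eqref{werner} and that the optimal symmetric fidelity is attained by a universal cloner.
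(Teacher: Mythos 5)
Your proposal follows essentially the same route as the paper: the paper's own proof simply extends the bipartite argument of Proposition~\ref{bipartite} to $N$ outputs and plugs $N_1=1$, $N_2=N$ into the Werner formula~\eqref{werner}, converting to entanglement fidelity via $f=\frac{dF+1}{d+1}$ to obtain $\frac{N+d-1}{dN}$, exactly as you do (you merely spell out the intermediate $d^{N+1}$ normalisation that the paper leaves implicit). The rigor concern you flag about why the universal $1\to N$ cloning optimum governs this one-receiver/many-senders monogamy is not resolved in the paper either --- the authors themselves remark in the discussion that there is no \emph{a priori} reason the cloning bound should apply here, so your caveat matches the paper's own level of justification.
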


\begin{proof}
Suppose that Alice has at her disposal $N$ maximally entangled states $\psi^+_{A'_1A_1},\psi^{+}_{A'_2A_2},\ldots, \psi^+_{A'_NA_N}$. Then the goal of Bob is to recover one of the states $\rho_{A'_iB}$, for $i=1,\ldots,N$. In this situation we can again use the argumentation based on the monogamy relations, this time for more than three parties. Indeed, plugging $N_1=1$ and $N_2 \mapsto N$ into~\eqref{werner}, we have
\begin{equation}
\frac{1}{N}\sum_{i=1}^N f_{\max}(\rho^{A'_iB})=\frac{1}{N}+\frac{2(N-1)}{N(d+1)}.
\end{equation}
Using the relation connection transmission fidelity and entanglement fidelity we end up with the following bound on $P_{\text{succ}}^{\text{QM}}$:
\begin{equation}
P_{\text{succ}}^{\text{QM}}\leq \frac{1}{N}\sum_{i=1}^N F_{\max}(\rho^{A'_iB})=\frac{N+d-1}{dN}.
\end{equation}
This finishes the proof.
\end{proof}

In order to provide a bound $P_{\text{succ}}^{\text{QM}}$ in the general case with a non-uniform probabilistic distribution of the input states one can employ the results from~\cite{kay2009optimal} connecting the entanglement fidelities $F_{\max}(\rho^{A'_iB})$:
 \begin{align}
 &\sum^{N}_{i=1} F_{\max}(\rho^{A'_iB}) \\
 & \qquad\quad \leq \frac{d-1}{d} + \frac{1}{N+d-1}\bigg(\sum^{N}_{i=1} \sqrt{ F_{\max}(\rho^{A'_iB})} \bigg)^2. \notag
\end{align}
With the notation $x_i \vc \sqrt{F_{\max}(\rho^{A'_iB})}$, the quantum bound for $P_\mathrm{succ}$ equals to
\begin{align}\label{PQx}
 \sum_{i=1}^N p_i \, x_i^2.
\end{align}
Hence, one has to maximise the quantity \eqref{PQx} over $x_i \in [0,1]$, under the constraint
\begin{equation}\label{constx}
 \sum^{N}_{i=1} x_i^2 \leq \frac{d-1}{d} + \frac{1}{N+d-1}\bigg(\sum^{N}_{i=1} x_i \bigg)^2.
\end{equation}
Eq. \eqref{constx} determines the interior of a rotated $N$-ellipsoid in $\mathbb{R}^N$. Because the function \eqref{PQx} is increasing in all variables $x_i$ it attains the maximum on the boundary. Consequently, it is sufficient to seek its maximal value on the hypersurface in $[0,1]^N$ determined by the equation
\begin{equation}\label{hyperx}
 \sum^{N}_{i=1} x_i^2 = \frac{d-1}{d} + \frac{1}{N+d-1}\bigg(\sum^{N}_{i=1} x_i \bigg)^2.
\end{equation}
In particular, for $N=2$ with $p_1 = 1- p_2 = p$ one finds the bound
\begin{align}
P_{\text{succ}}^{\text{QM}} \leq \frac{1}{2} \left(1 + \sqrt{1+ \frac{4 \left(d^2-1\right) (p-1) p}{d^2}} \, \right). 
\end{align}

\section{\label{sec:outlook}Discussion and outlook}

We studied two instances of 
random access codes using quantum information. The first one 
involved remote access to one of the two given quantum states {\it via} an
NS-QRAC box implemented quantumly in the `distant labs' paradigm.  We considered (see Sec. \ref{sec:NSQRACwithQRACSE}) a variation where a constrained quantum channel is used as contrasted to a constrained classical channel used in \cite{grudka2015nonsignaling}  . This, in a way, is the most natural quantum version of the $2\rightarrow 1$ classical RAC problem, because we have remote access to one of two qubits transmitted over a qubit channel. In this case, we found a lower bound for the probability of success $P_{\text{succ}}^\text{QM}\geq 0.728$.

We also considered another modification --- the CNS-QRAC (Sec. \ref{sec:constrainedententanglement}), where we find that the trade-off for information transmission 
corresponds to a typical monogamy relation. In this case, we provided a reasonable upper bound for the probability of success \eqref{PdN} for a general CNS-QRAC with $N$ input states of dimension $d$. 
An interesting aspect of this scenario is that here the transmission of quantum information does not involve a single sender and two receivers, as it is the case in the standard quantum channel capacity restrictions based on quantum cloning (see for example \cite{PhysRevA.57.2368,CubittT}).
Instead, the transmission goes from the composite system of two `senders' who cooperate quantumly to transfer quantum information to a single receiver.
Since the `senders' are required to transmit 
different quantum information, which is also 
supposed to come 
as an {\it alternative} rather than jointly, there seems to 
be no {\it a priori} reasons why the cloning bound should be 
obeyed. Nevertheless, it turns out to apply in such a scenario as well.

An interesting open problem is to extend this analysis to 
the quantumly simulable NS boxes, where the two 
parties may interact (see \cite{Piani}, \cite{Schmid_2020}). Clearly, when the labs are far apart, such boxes are super-quantum. In fact --- as shown in a recent paper \cite{Schmid_2021} --- its subclasses with classical inputs are even interconvertible with PR boxes with the help of shared entanglement and local operations. Hence, at the intuitive level, it is possible that the corresponding CNS-QRAC might allow both (all) fidelities to be perfect, but this conjecture would need further investigation. 

The second instance of random access codes, and to some extent a complementary scenario, has been introduced here to analyse the power of quantum entanglement when aiding quantum random access coding. To this end, we have defined and studied 
quantum random access codes with shared entanglement and a quantum channel. An interesting aspect of this problem occurs for the class of QRAC-SE $2_{d^2} \xmapsto{p,1_{d^2}} (1_{d} , 1_{d})$ problems, as the encoding by Alice depends on the existence of generalised Gray codes. This should be compared with the problem of QRAC-SR \cite{ambainis2008quantum}, where Alice's encoding depends on finding some form of symmetric quantum states in the Bloch sphere. The presented explicit protocols provide lower bounds for the probabilities of success. It is an open problem to find the relevant upper bounds, perhaps using numerical methods similar to the techniques involved in finding the upper bounds in \cite{tavakoli2021correlations}. Lastly, we provided a proof of concept for extending the QRAC-SE to $f-$QRAC-SE over Boolean functions, similar to the studies of $f-$QRAC in \cite{doriguello2021quantum}, which may inspire an interesting line of future research.

\section*{Acknowledgements}
We thank the comments of both reviwers which helped improve the presentation of this work. N.S. and P.H. acknowledge support by the Foundation for Polish Science (IRAP project, ICTQT, contract no. MAB/2018/5, co-financed by EU within Smart Growth Operational Programme). N.S acknowledges the partial support of an NSERC grant as well as acknowledges that research at Perimeter Institute is supported in part by the Government of Canada. M.S. is supported by the NCN grant Sonatina 2 UMO-2018/28/C/ST2/00004. M.E. acknowledges the support of the Foundation for Polish Science under the Team-Net Project no. POIR.04.04.00-00-17C1/18-00.

\bibliography{bib.bib}
\appendix

\end{document}